\journal{Statistics in Medicine}
\def\beqr{\begin{eqnarray}}
	\def\eeqr{\end{eqnarray}}
\def\beqrs{\begin{eqnarray*}}
	\def\eeqrs{\end{eqnarray*}}
\def\1{{\bf 1}}
\begin{document}

\title{An alternative measure for quantifying the heterogeneity in meta-analysis}

\author[1]{Ke Yang}

\author[2]{Enxuan Lin}

\author[3]{Wangli Xu}

\author[4]{Liping Zhu}

\author[5]{Tiejun Tong}

\authormark{YANG \textsc{et al.}}
\titlemark{An alternative measure for quantifying the heterogeneity in meta-analysis}

\address[1]{\orgdiv{Department of Statistics and Data Science}, \orgname{Beijing University of Technology}, \orgaddress{\state{Beijing}, \country{China}}}

\address[2]{\orgdiv{Department of Biostatistics and Information}, \orgname{Innovent Biologics, Inc.}, \orgaddress{\state{Beijing}, \country{China}}}

\address[3]{\orgdiv{School of Statistics}, \orgname{Renmin University of China}, \orgaddress{\state{Beijing}, \country{China}}}

\address[4]{\orgdiv{Institute of Statistics and Big Data}, \orgname{Renmin University of China}, \orgaddress{\state{Beijing}, \country{China}}}

\address[5]{\orgdiv{Department of Mathematics}, \orgname{Hong Kong Baptist University}, \orgaddress{\state{Hong Kong}}}

\corres{Corresponding author Tiejun Tong, Department of Mathematics, Hong Kong Baptist University, Hong Kong. \email{tongt@hkbu.edu.hk}}


\abstract[Abstract]{Quantifying the heterogeneity is an important issue in meta-analysis, and among the existing measures, the $I^2$ statistic is most commonly used. In this paper, we first illustrate with a simple example that the $I^2$ statistic is heavily dependent on the study sample sizes, mainly because it is used to quantify the heterogeneity between the observed effect sizes. To reduce the influence of sample sizes, we introduce an alternative measure that aims to directly measure the heterogeneity between the study populations involved in the meta-analysis. We further propose a new estimator, namely the $I^2_{\rm A}$ statistic, to estimate the newly defined measure of heterogeneity. For practical implementation, the exact formulas of the $I^2_{\rm A}$ statistic are also derived under two common scenarios with the effect size as the mean difference (MD) or the standardized mean difference (SMD). Simulations and real data analysis demonstrate that the $I^2_{\rm A}$ statistic provides an asymptotically unbiased estimator for the absolute heterogeneity between the study populations, and it is also independent of the study sample sizes as expected. To conclude, our newly defined $I_A^2$ statistic can be used as a supplemental measure of heterogeneity to monitor the situations where the study effect sizes are indeed similar with little biological difference. In such scenario, the fixed-effect model can be appropriate; nevertheless, when the sample sizes are sufficiently large, the $I^2$ statistic may still increase to 1 and subsequently suggest the random-effects model for meta-analysis.}

\keywords{ANOVA, heterogeneity, intraclass correlation coefficient, meta-analysis, the $I^2$ statistic, the $I^2_{\rm A}$ statistic}

\jnlcitation{\cname{%
\author{Yang K},
\author{Lin E},
\author{Xu W},
\author{Zhu L}, and
\author{Tong T}}.
\ctitle{An alternative measure for quantifying the heterogeneity in meta-analysis.} \cjournal{\it Statistics in Medicine} \cvol{2024;00(00):1--18}.}

\maketitle

\section{Introduction}\label{sec1}
Meta-analysis is a statistical technique for evidence-based practice, which aims to synthesize multiple studies and produce a summary conclusion for the whole body of research \cite{egger1997meta}. In the literature, there are two commonly used statistical models for meta-analysis, namely, the fixed-effect model and the random-effects model. Among them, the fixed-effect model assumes that the effect sizes of different studies are all the same, which is somewhat restrictive and may not be realistic in practice. The effect sizes often differ between the studies due to variability in study design, outcome measurement tools, risk of bias, and the participants, interventions and outcomes studied \cite{higgins2019cochrane}, etc. Such diversity in the effect sizes is known as the heterogeneity. When the heterogeneity exists, the random-effects model ought to be applied for meta-analysis. In such scenarios, it is of great importance to properly quantify the heterogeneity so as to explore the generalizability of the findings from a meta-analysis.

To describe the heterogeneity in detail, we first introduce the random-effects model for meta-analysis. Let $k\ge 2$ be the total number of studies, and $y_i$ be the observed effect sizes from the studies $i=1,\dots,k$. For each study with true effect size $\mu_i$, we assume that $y_i$ is normally distributed with mean $\mu_i=E(y_i|\mu_i)$ and variance $\sigma_{y_i}^2={\rm var}(y_i|\mu_i)$. Moreover, to account for the heterogeneity between the studies, we also assume that the individual effect sizes $\mu_i$ follow another normal distribution with mean $\mu$ and variance $\tau^2>0$. Taken together, the random-effects model for meta-analysis can be expressed as
\beqr\label{meta}
y_i=\mu+\delta_i+\epsilon_i,\quad \delta_i\stackrel{\text{i.i.d.}}{\sim} N(0,\tau^2), \quad \epsilon_i\stackrel{\text{ind}}{\sim} N(0,\sigma_{y_i}^2),
\eeqr
where ``i.i.d." represents independent and identically distributed, ``ind" represents independently distributed, $\tau^2$ is the between-study variance, and $\sigma_{y_i}^2$ are the within-study variances. In addition, the study deviations $\delta_i=\mu_i-\mu$ and the random errors $\epsilon_i$ are assumed to be independent of each other. When $\delta_i$ are all zero, model (\ref{meta}) reduces to the fixed-effect model and there is no heterogeneity between the studies.

To test the existence of heterogeneity for model (\ref{meta}),  Cochran (1954) \cite{cochran1954combination} proposed the $Q$ statistic as
\beqr\label{qs}
Q=\sum_{i=1}^kw_i\left(y_i-\frac{\sum_{i=1}^kw_iy_i}{\sum_{i=1}^kw_i}\right)^2,
\eeqr
where $w_i=1/\sigma_{y_i}^2$ are the inverse-variance weights. Noting that $\sigma_{y_i}^2$ can often be estimated with high precision, it is a common practice in meta-analysis that the within-study variances are regarded as known. Nevertheless, when used as a measure of heterogeneity, it is often criticized that the value of $Q$ will increase with the number of studies. Another measure for heterogeneity is to apply the between-study variance $\tau^2$, yet it is known to be specific to a particular effect metric, making it impossible to compare across different meta-analyses\cite{dersimonian1986meta}. To have a fair comparison,  Higgins and Thompson (2002) \cite{higgins2002quantifying} and  Higgins
et al. (2003) \cite{higgins2003measuring} introduced the $I^2$ statistic by a two-step procedure, under the assumption that the within-study variances $\sigma_{y_i}^2=\sigma_y^2$ are all the same. They first defined the measure of heterogeneity between the studies as
\beqr\label{icch}
{\rm ICC}_{\rm HT}=\frac{\tau^2}{{\rm var}(y_i)}=\frac{\tau^2}{\tau^2+\sigma_y^2},
\eeqr
and then proposed
\beqr\label{i2}
I^2=\frac{\hat\tau^2}{\hat\tau^2+\tilde\sigma_y^2}
=\max\left\{\frac{Q-(k-1)}{Q},0\right\}
\eeqr
to estimate the unknown ${\rm ICC}_{\rm HT}$, where $\hat\tau^2={\mbox{max}}\{\{Q-(k-1)\}/(\sum_{i=1}^kw_i-\sum_{i=1}^kw_i^2/\sum_{i=1}^kw_i),0\}$ is the DerSimonian-Laird estimator \cite{dersimonian1986meta} and $\tilde\sigma_y^2=(k-1)/(\sum_{i=1}^kw_i-\sum_{i=1}^kw_i^2/\sum_{i=1}^kw_i)$. When the within-study variances are all the same, $\tilde\sigma_y^2$ is identical to the common $\sigma_y^2$. Otherwise if they differ,  B{\"o}hning et al. (2017) \cite{bohning2017some} has showed that $\tilde\sigma_y^2$ is asymptotically identical to the harmonic mean $(\sum_{i=1}^kw_i/k)^{-1}$ of the within-study variances. Moreover, the $I^2$ statistic is also guaranteed to be within the interval $[0,1)$, which is appealing in that it does not depend on the number of studies and is irrespective of the effect metric.

Thanks to its nice properties, the $I^2$ statistic is nowadays routinely reported in the forest plots for meta-analyses, and/or used as a criterion for model selection between the fixed-effect model and the random-effects model. In Google Scholar, as of January 2025, the two papers by Higgins and Thompson (2002) \cite{higgins2002quantifying} and  Higgins et al. (2003) \cite{higgins2003measuring} have been cited more than 32,000 and 57,000 times, respectively. Despite of its huge popularity, there were evidences in the literature reporting the limitations
of the $I^2$ statistic. In particular, R{\"u}cker et al. (2008) \cite{rucker2008undue} found that the $I^2$ statistic always increases rapidly to 1 when the sample sizes are large, regardless of whether or not the heterogeneity between the studies is clinically important. For other discussions on the $I^2$ statistic as a measure of heterogeneity, one may refer to, for example, Riley et al. (2016)\cite{riley2016external}, IntHout et al. (2016) \cite{inthout2016plea}, Borenstein et al. (2017) \cite{borenstein2017basics}, Sangnawakij et al. (2019) \cite{sangnawakij2019meta}, Holling et al. (2020)  \cite{holling2020evaluation}, and the references therein. This motivates us to further explore the characteristics of the $I^2$ statistic as a measure of heterogeneity for meta-analysis.

To answer this question, we first present a motivating example to demonstrate that the $I^2$ statistic was defined to quantify the heterogeneity between the observed effect sizes rather than that between the study populations. In view of this, we regard the $I^2$ statistic as a relative measure of heterogeneity. We further draw a connection between the one-way analysis of variance (ANOVA) and the random-effects meta-analysis, and subsequently introduce an alternative measure for quantifying the heterogeneity in the random-effects model, which is independent of study sample sizes and can serve as an absolute measure of heterogeneity. For details, see Section \ref{sec3.2} for the defined ${\rm ICC}_{\rm MA}$ in formula (\ref{icc}). To move forward, the statistical properties of ${\rm ICC}_{\rm MA}$ are also derived that explore the distinction between our new measure and the existing measures, together with an asymptotically unbiased estimator of the unknown ${\rm ICC}_{\rm MA}$ based on ANOVA. Lastly and most importantly, we also manage to provide an easy-to-implement estimator, namely the $I^2_{\rm A}$ statistic, to estimate ${\rm ICC}_{\rm MA}$ based on the $Q$ statistic, in a way similar for $I^2$ in (\ref{i2}) to estimate ${\rm ICC}_{\rm HT}$.

The remainder of the paper is organized as follows. In Section \ref{sec2}, we give a motivating example to illustrate that ${\rm ICC}_{\rm HT}$ heavily depends on the study sample sizes. In Section Section \ref{sec3}, by drawing a close connection between ANOVA and the random-effects meta-analysis, we introduce an alternative measure for quantifying the heterogeneity between the studies, namely ${\rm ICC}_{\rm MA}$, and then provide an ANOVA-based method to estimate this measure. In Sections \ref{sec4} to \ref{sec6}, we further derive the easy-to-implement $I^2_{\rm A}$ statistic to estimate the new heterogeneity measure ${\rm ICC}_{\rm MA}$, using the $Q$ statistic under three common scenarios with the raw mean, the mean difference, or the standardized mean difference as the effect size, respectively. While for practical implementation, real data analysis and numerical results are also presented for each scenario. Finally, we conclude the paper in Section \ref{sec7} and provide the technical details in the Appendix.

\section{A motivating example}\label{sec2}
In this section, we illustrate how ${\rm ICC}_{\rm HT}$ in (\ref{icch}) varies along with the sample sizes, and so may not be able to serve as a measure of heterogeneity between the study populations. To confirm this claim, we first consider a motivating example of three studies with data generated from normal populations $N(-0.05,1)$, $N(0,1)$ and $N(0.05,1)$, respectively. From the top-left panel of Figure \ref{fig1}, it is evident that the three study populations are largely overlapped. Taken the three study means as a random sample, the between-study variance can be estimated by the sample variance as $\tilde\tau^2=\{(-0.05-0)^2+(0-0)^2+(0.05-0)^2\}/2=0.0025$.
\begin{figure}[htp!]
	\begin{center}
		\begin{tabular}{ccc}
			\psfig{figure=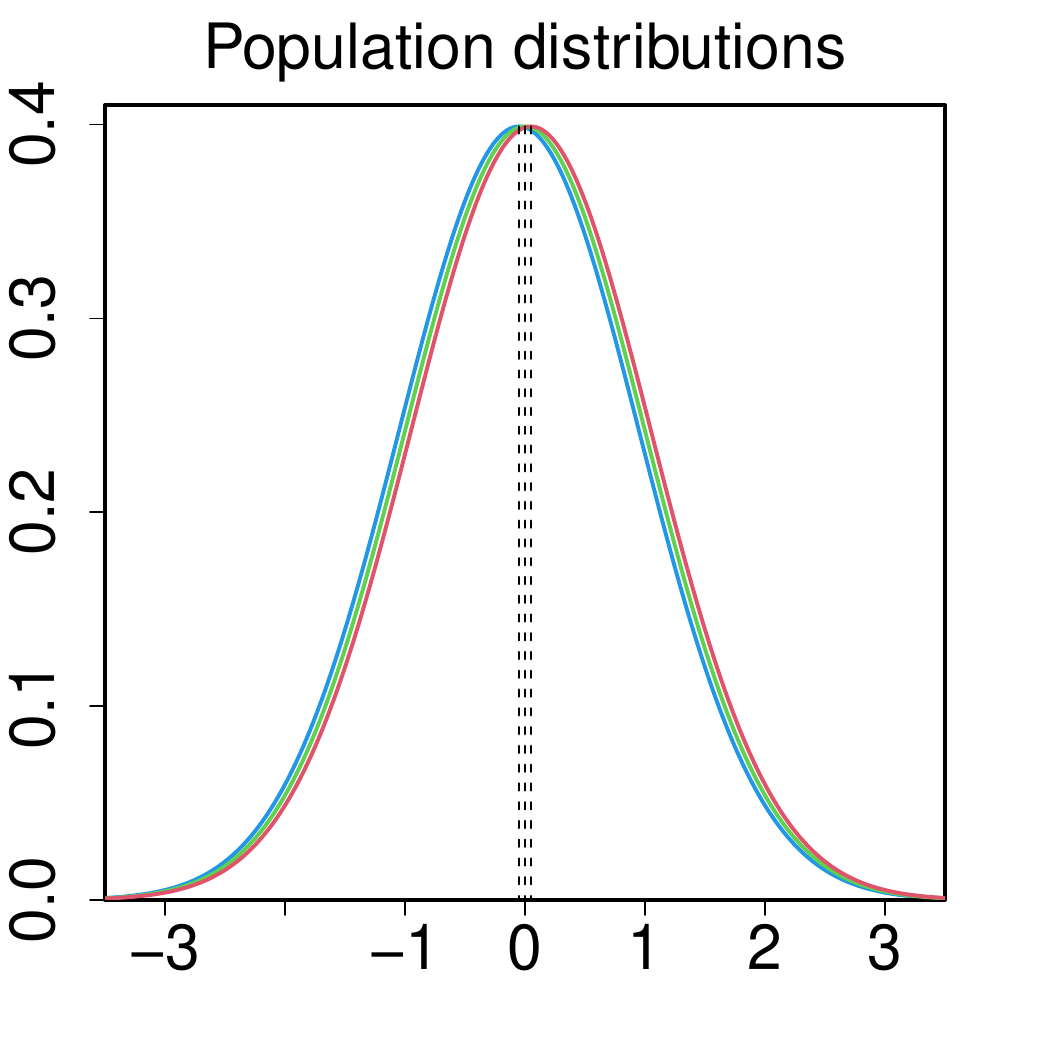,width=2.1in,angle=0}&
			\psfig{figure=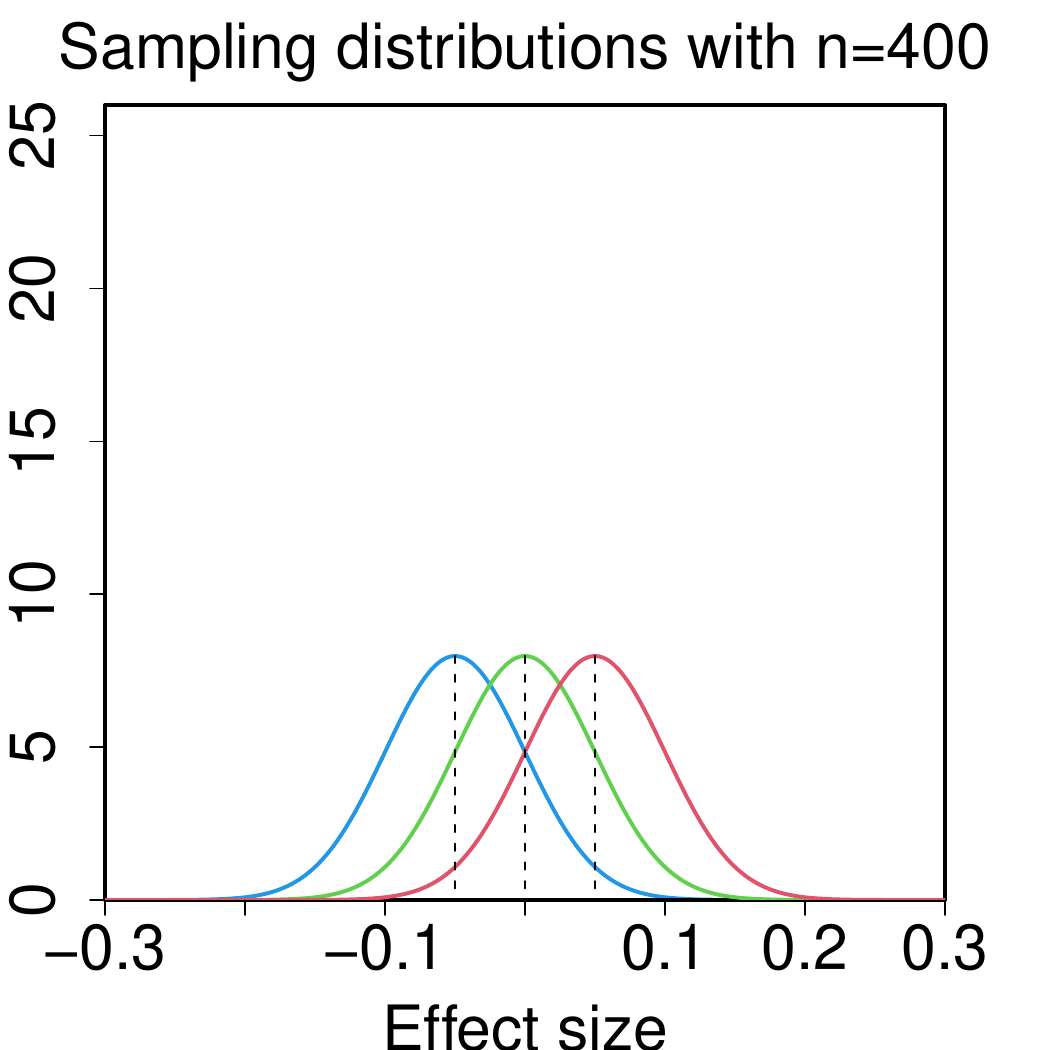,width=2.1in,angle=0}&
			\psfig{figure=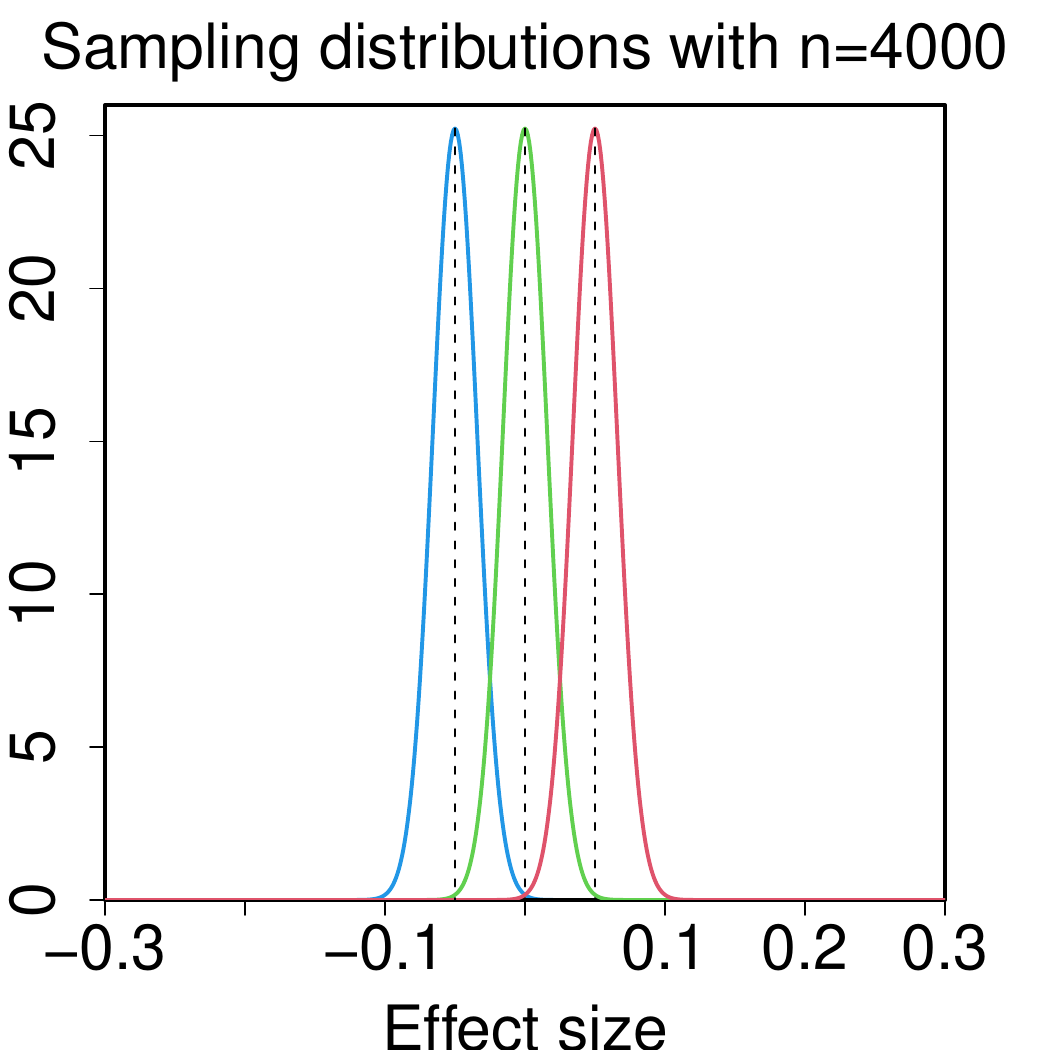,width=2.1in,angle=0}
		\end{tabular}{\caption{Population distributions of the three studies and the sampling distributions of the observed effect sizes. Left panel: Population distributions are $N(-0.05,1)$ in blue, $N(0,1)$ in green and $N(0.05,1)$ in red, respectively. Middle panel: Sampling distributions are $N(-0.05,0.0025)$, $N(0,0.0025)$ and $N(0.05,0.0025)$, respectively. Right panel: Sampling distributions  are $N(-0.05,0.00025)$, $N(0,0.00025)$ and $N(0.05,0.00025)$, respectively.}\label{fig1}}
	\end{center}
\end{figure}

To explain why ${\rm ICC}_{\rm HT}$ is not a measure of heterogeneity between the study populations, we consider two scenarios to meta-analyze the three studies, with the population means being treated as the effect sizes. The first scenario assumes $n=400$ patients in each study. By taking the sample means, the sampling distributions of the observed effect sizes are thus $N(-0.05,0.0025)$, $N(0,0.0025)$ and $N(0.05,0.0025)$, respectively, yielding $\sigma_y^2=0.0025$ as the common within-study variance. Further by the definition in (\ref{icch}), we have 
\beqrs
{\rm ICC}_{\rm HT}\approx\frac{0.0025}{0.0025+0.0025}=50\%.
\eeqrs
In the second scenario, we consider $n=4000$ for each study. This leads to the sampling distributions of the observed effect sizes as $N(-0.05,0.00025)$, $N(0,0.00025)$ and $N(0.05,0.00025)$, respectively. Further by $\sigma_y^2=0.00025$, the measure of heterogeneity is
\beqrs
{\rm ICC}_{\rm HT}\approx\frac{0.00025}{0.00025+0.0025}=90.9\%.
\eeqrs
Finally, for ease of comparison, we also plot the sampling distributions of the observed effect sizes in Figure \ref{fig1} for the two hypothetical scenarios with varying study sample sizes.

The above example clearly shows that ${\rm ICC}_{\rm HT}$, defined in (\ref{icch}) by  Higgins and Thompson (2002), measures the heterogeneity between the observed effect sizes and thus heavily depends on the study sample sizes. In other words, ${\rm ICC}_{\rm HT}$ is a relative measure of heterogeneity for meta-analysis. Consequently, as a sample estimate of ${\rm ICC}_{\rm HT}$, the $I^2$ statistic is also heavily dependent on the sample sizes. This coincides with the observations by R{\"u}cker et al. (2008) \cite{rucker2008undue}. Specifically, in our motivating example, ${\rm ICC}_{\rm HT}$ increases rapidly to about $90\%$ when the sample sizes are 4000, even though it is evident that the three populations are largely overlapped with each other. To summarize, when the study sample sizes $n_i$ are large enough, it will always yield an $I^2$ value being close to 1. On the other hand, compared with the population variance 1, the differences between the three study means $(-0.05, 0, 0.05)$ may not be clinically important. To support this claim, we note that the Scientific Committee of the European Food Safety Authority have also emphasized the importance of assessing the biological differences\cite{efsa2011statistical}. This hence motivates us to introduce an alternative measure that quantifies the heterogeneity between the study populations directly, in a way to reduce the influence of sample sizes.

\section{A new measure of heterogeneity and the $I^2_{\rm A}$ statistic}\label{sec3}
To further explore the characteristics of ${\rm ICC}_{\rm HT}$, we also draw in this section an interesting connection between one-way analysis of variance (ANOVA) and meta-analysis. And on the basis of that, a new measure for quantifying the heterogeneity between the study populations will be introduced, and moreover by studying its statistical properties, it is also clarified why it can add new value to meta-analysis. Lastly for completeness, we also provide an asymptotically unbiased estimator, namely the $I^2_{\rm ANOVA}$ statistic, to estimate the new measure of heterogeneity in Section \ref{sec3.3}. Nevertheless, as will be seen, the $I^2_{\rm ANOVA}$ statistic may not be easy to implement for practitioners, which motivates us to further propose a much simpler and more elegant estimator in  Sections \ref{sec4} to \ref{sec6} based on the $Q$ statistic. For readers who are not familiar with ANOVA, Section 3.3 can be skipped without affecting the subsequent reading.

\subsection{Connection between ANOVA and meta-analysis}\label{sec3.1}
To introduce the one-way ANOVA, we let $y_{ij}$ be the $j$th observation in the $i$th population, $i=1,\dots,k$ and $j=1,\dots,n_i$, where $k$ is the number of studies and $n_i$ are the study sample sizes from each population. The random-effects ANOVA for the observed data is then
\beqr\label{IPD}
y_{ij}=\mu+\delta_i+\xi_{ij},\quad \delta_i\stackrel{\text{i.i.d.}}{\sim} N(0,\tau^2), \quad \xi_{ij}\stackrel{\text{i.i.d.}}{\sim} N(0,\sigma^2),
\eeqr
where $\mu$ is the grand mean, $\delta_i$ are the treatment effects, and $\xi_{ij}$ are the random errors. We further assume that $\delta_i$ are i.i.d. normal random variables with mean 0 and variance $\tau^2\geq 0$, $\xi_{ij}$ are i.i.d. normal random errors with mean 0 and variance $\sigma^2>0$, and that $\delta_i$ and $\xi_{ij}$ are independent of each other. 
In addition, we refer to $\mu_i=\mu+\delta_i$ as the individual means, $\tau^2$ as the between-study variance, $\sigma^2$ as the common error variance for all $k$ populations, and $\tau^2+\sigma^2$ as the total variance of each observation.

To draw a close connection between ANOVA and meta-analysis, we consider a hypothetical scenario in which the experimenter first computed the sample mean and its variance for each population, namely $y_i=\sum_{j=1}^{n_i} y_{ij}/n_i$ and $\hat\sigma_{y_i}^2=\sum_{j=1}^{n_i}(y_{ij}-y_i)^2/\{n_i(n_i-1)\}$ for $i=1,\dots,k$, and then reported these summary data rather than the raw data to the public. In practice, there are reasons why one must do so, including, for example, due to the privacy protection for which the individual patient data cannot be released. Under such a scenario, if some researchers want to re-analyze the experiment using only the publicly available data, it then yields a new random-effects model as
\beqr\label{REM}
y_i=\mu+\delta_i+\epsilon_i,\quad \delta_i\stackrel{\text{i.i.d.}}{\sim} N(0,\tau^2), \quad \epsilon_i\stackrel{\text{ind}}{\sim} N(0,\sigma^2/n_i),
\eeqr
where $y_i$ are the sample means, $\mu$ and $\delta_i$ are the same as defined in model (\ref{IPD}), and $\epsilon_i=\sum_{j=1}^{n_i}\xi_{ij}/n_i$ are independent random errors with mean 0 and variance $\sigma^2/n_i$, where $i=1,\dots,k$. Now from the point of view of meta-analysis, if we treat $y_i$ as the reported effect sizes and $\hat{\sigma}^2_{y_i}$ as the within-study variances representing $\sigma^2/n_i$, then model (\ref{REM}) is essentially the same as the random-effects model in (\ref{meta}). This interesting connection shows that, when the ANOVA model with raw data only releases the summary data to the public, it will then yield a meta-analysis model with summary data.

For ease of comparison, we also summarize some key components in Table \ref{table1} for both the ANOVA model in (\ref{IPD}) and the meta-analysis model in (\ref{REM}). For the meta-analysis model, under the assumption that the within-study variances, i.e. $\sigma^2/n_i$, are all equal, Higgins and Thompson (2002) \cite{higgins2002quantifying} interpreted the measure of heterogeneity as the proportion of total variance that is ``between the studies". More specifically, by the last column of Table \ref{table1}, they introduced the measure of heterogeneity for meta-analysis as in (\ref{icch}),
where $\sigma_{y}^2=\sigma^2/n_i$ is the common within-study variance for the observed effect sizes. This clearly explains why ${\rm ICC}_{\rm HT}$ will be heavily dependent on the study sample sizes. When the sample sizes go to infinity, the within-study variances will converge to zero so that ${\rm ICC}_{\rm HT}$ will increase to 1, as having been observed in R{\"u}cker et al. (2008) \cite{rucker2008undue}. This also coincides with our motivating example in Section \ref{sec2} that, when the sample size varies from 400 to 4000, their measure of heterogeneity will increase from $50\%$ to about $90\%$, regardless of whether or not the heterogeneity between the studies is clinically important.

For the ANOVA model, it is well known that the intraclass correlation coefficient (ICC) is the most commonly used measure of heterogeneity \citep{fisher1925,smith1957estimation,donner1979use,mcgraw1996forming}, which interprets the proportion of total variance that is ``between populations". More specifically, by Table \ref{table1}, ${\rm ICC}$ can be expressed as
\beqr\label{rho}
{\rm ICC}=\frac{\tau^2}{{\rm var}(y_{ij})}=\frac{\tau^2}{\tau^2+\sigma^2}.
\eeqr
As shown in the hypothetical scenario, the ANOVA model in (\ref{IPD}) and the meta-analysis model in (\ref{REM}) are, in fact, modeling the same populations, even though one uses the raw data and the other uses the summary data. In the special case when the mean value is taken as the effect size, it is known that the sample mean is a sufficient and complete statistic for the normal mean; in other words, the raw data and the summary data contain exactly the same information regarding the effect size. With this insight, we expect that the measures of heterogeneity between the study populations for the two models should also be the same, regardless of whether the raw data or the summary data are being used.

\subsection{An intrinsic measure of heterogeneity}\label{sec3.2}
Inspired by the intrinsic connection between ANOVA and meta-analysis, we now follow the same assumption as in ANOVA that the population variances $n_i\sigma_{y_i}^2$ are all equal. For ease of presentation, we also denote the common study population variance as $\sigma_{\rm pop}^2$. Then by following ICC in (\ref{rho}) for ANOVA, we propose the following measure of heterogeneity for meta-analysis:
\beqr\label{icc}
{\rm ICC}_{\rm MA}=\frac{\tau^2}{{\rm var}(y_{ij})}=\frac{\tau^2}{\tau^2+\sigma_{\rm pop}^2}.
\eeqr
Note that the range of ${\rm ICC}_{\rm MA}$ is always within the interval $[0,1)$. Regarding the rationale of ${\rm ICC}_{\rm MA}$ for meta-analysis, one may also refer to the proposed measure in Sangnawakij et al.
(2019) \cite{sangnawakij2019meta}. And as mentioned in Section 2, a common population variance can be a more reasonable assumption for meta-analysis compared to a common within-study variance for all studies, in a way to mitigate the impact caused by the study sample sizes.

To further study the properties of ${\rm ICC}_{\rm MA}$ and explain why it can serve as an absolute measure of heterogeneity for meta-analysis, we first present the three statistical properties of ${\rm ICC}_{\rm HT}$ as follows.
\begin{enumerate}
	\item[(i)] \textit{Monotonicity.} ${\rm ICC}_{\rm HT}$ is a monotonically increasing function of the ratio $\tau^2/\sigma_y^2$. When the common within-study variance $\sigma_y^2$ is fixed, ${\rm ICC}_{\rm HT}$ will solely increase with the between-study variance $\tau^2$. This property was referred to as the ``dependence on the extent of heterogeneity" by Higgins and Thompson (2002) \cite{higgins2002quantifying}.
	
	\item[(ii)] \textit{Location and scale invariance.}
	${\rm ICC}_{\rm HT}$ is not affected by the location and scale of the effect sizes. This property was referred to as the ``scale invariance" by Higgins and Thompson (2002) \cite{higgins2002quantifying}.
	
	\item[(iii)] \textit{Study size invariance.} ${\rm ICC}_{\rm HT}$ is not affected by the total number of studies $k$. This property was referred to as the ``size invariance" by Higgins and Thompson (2002) \cite{higgins2002quantifying}.
\end{enumerate}
Thanks to the above properties, the $I^2$ statistic is nowadays the most popular measure for quantifying the heterogeneity in meta-analysis, compared to other existing measures including $Q$ and $\tau^2$. Nevertheless, we do wish to point out that the ``size invariance" in their property (iii) only represents the study size invariance but not includes the sample size invariance. As shown in the motivating example and also from the historical evidence in the literature, ${\rm ICC}_{\rm HT}$ does suffer from a heavy dependence on the study sample sizes.

While for the new measure of heterogeneity in (\ref{icc}), we show in  \hyperref[appA]{Appendix A} that ${\rm ICC}_{\rm MA}$ shares the following four properties:
\begin{enumerate}
	\item[(i$'$)] \textit{Monotonicity.} ${\rm ICC}_{\rm MA}$ is a monotonically increasing function of the ratio $\tau^2/\sigma_{\rm pop}^2$. When the common population variance $\sigma_{\rm pop}^2$ is fixed, ${\rm ICC}_{\rm MA}$ will solely increase with the between-study variance $\tau^2$.
	
	\item[(ii$'$)] \textit{Location and scale invariance.}
	${\rm ICC}_{\rm MA}$ is not affected by the location and scale of the effect sizes.
	
	\item[(iii$'$)] \textit{Study size invariance.} ${\rm ICC}_{\rm MA}$ is not affected by the total number of studies $k$.
	
	\item[(iv$'$)] \textit{Sample size invariance.} ${\rm ICC}_{\rm MA}$ is not affected by the sample size $n_i$ of each study.
\end{enumerate}
Note that the first three properties for ${\rm ICC}_{\rm MA}$ are essentially the same as those for ${\rm ICC}_{\rm HT}$. While for the importance of property (iv$'$), let us illustrate again using the motivating example in Section \ref{sec2}. Under the assumption of a common population variance, the term $\sigma_{\rm pop}^2$ remains constant at 1 no matter how the sample sizes vary. Further by (\ref{icc}), the value of ${\rm ICC}_{\rm MA}$ under each scenario will always be $0.0025/(0.0025+1)\approx 0.25\%$, indicating that the three study populations are indeed highly overlapped with a small amount of heterogeneity. To conclude, it is because of the sample size invariance in property (iv$'$) that distinguishes our new ${\rm ICC}_{\rm MA}$ from the existing ${\rm ICC}_{\rm HT}$, which also perfectly explains why ${\rm ICC}_{\rm MA}$ can serve as a new measure for quantifying the heterogeneity between the study populations. Due to its sample size invariance, we regard ${\rm ICC}_{\rm MA}$ as an absolute measure of heterogeneity.

\subsection{Estimation of ${\rm ICC}_{\rm MA}$ based on ANOVA}\label{sec3.3}
	
In ANOVA, there has been extensive and well-established research on the estimation of ICC. For easy reference, we have also provided a brief review in \hyperref[appB]{Appendix B}. Among the existing methods, it is known that the ANOVA estimator is the most widely used thanks to its straightforwardness and effectiveness. Inspired by this, we also propose an ANOVA-based estimator for ${\rm ICC}_{\rm MA}$ in the framework of meta-analysis.

Following the random-effects ANOVA in (\ref{IPD}), the total variance of the observations is given by $\sum_{i=1}^k\sum_{j=1}^{n_i}(y_{ij}-\bar y)^2$, which can be divided into two components as the sum of squares between the populations and the error sum of squares within the populations. Based on this variance partitioning,  Cochran (1939) \cite{cochran1939use} derived the method of moments estimators of $\tau^2$ and $\sigma^2$, and then by plugging them into formula (\ref{rho}), it yields the well known ANOVA estimator for the unknown ICC. In parallel, following the random-effects model for meta-analysis in (\ref{meta}), we first assume that $\hat\sigma_{y_i}^2$ are the estimated within-study variance from each study, as also mentioned in Section \ref{sec3.1}. We further define the mean square between the populations (MSB) as 
\beqr\label{msb}
{\rm MSB}_{\rm MA}=\frac{1}{k-1}\sum_{i=1}^k\left\{n_i\left(y_i-\bar y\right)^2\right\},
\eeqr
where $\bar y=\sum_{i=1}^k(n_iy_i)/\sum_{i=1}^kn_i$, and the mean square within the populations (MSW) as
\beqr\label{msw}
{\rm MSW}_{\rm MA}=\frac{1}{\sum_{i=1}^k\left(n_i-1\right)}\sum_{i=1}^k\left\{n_i\left(n_i-1\right)\hat\sigma_{y_i}^2)\right\}.
\eeqr
Moreover, let 
\beqr\label{nbar}
\tilde n=\frac{1}{k-1}\left(\sum_{i=1}^kn_i-\sum_{i=1}^kn^2_i/\sum_{i=1}^kn_i\right)
\eeqr
be the adjusted mean sample size \cite{thomas1978interval} that accounts for the variation of the sample sizes from different studies. Then by the same method for estimating ICC, our new estimator for ${\rm ICC}_{\rm MA}$ is given as
\beqr\label{iqb}
I^2_{\rm ANOVA}=\max\left\{\frac{{\rm MSB}_{\rm MA}-{\rm MSW}_{\rm MA}}{{\rm MSB}_{\rm MA}+(\tilde n-1){\rm MSW}_{\rm MA}},0\right\}.
\eeqr
Similar to the $I^2$ statistic in (\ref{i2}), the maximum operation is taken to avoid a negative estimate. For more details on the derivation of $I^2_{\rm ANOVA}$, one may refer to \hyperref[appC]{Appendix C}.

Up to now, we have used the generic notation $y_i$ as the observed effect size, together with its standard error $\hat\sigma_{y_i}$ and the sample size $n_i$. Note that this is the simplest scenario, in which the effect size is represented by the mean $y_i$ from each study with only one arm. In addition to the mean, two other commonly used effect sizes for continuous outcomes are the mean difference (MD) and the standardized mean difference (SMD), which are applicable to meta-analysis of studies with two arms. In the next two paragraphs, we show that the $I^2_{\rm ANOVA}$ statistic in (\ref{iqb}) can be directly generalized to handle these two scenarios.

For a meta-analysis of MD, the summary statistics for the $i$th study often consist of the observed MD $y_i$, the sample sizes $n_i^T$ and $n_i^C$, and the standard errors $\hat\sigma_{y_i^T}$ and $\hat\sigma_{y_i^C}$ associated with the treatment and control groups. With these notations, the mean square within the populations can be computed as
\beqrs
{\rm MSW}_{\rm MA}&=&\frac{\sum_{i=1}^k\left\{n_i^T\left(n_i^T-1\right)\hat\sigma^2_{y^T_i}+n_i^C\left(n_i^C-1\right)\hat\sigma^2_{y^C_i}\right\}}{\sum_{i=1}^k\left(n_i^T+n_i^C\right)-2k}.
\eeqrs
In addition, by defining the effective sample size as $n_i=1/(1/n_i^T+1/n_i^C)$ for each study, the formulas (\ref{msb}) and (\ref{nbar}) can also be directly followed to calculate ${\rm MSB}_{\rm MA}$ and the adjusted mean sample size $\tilde n$. Lastly by (\ref{iqb}), we can derive the $I^2_{\rm ANOVA}$ statistic as the estimated measure of heterogeneity.
For more details about the meta-analysis of MD including the statistical models and the underlying assumptions, one may refer to \hyperref[appD]{Appendix D}.

For a meta-analysis of SMD, the summary statistics will instead report the observed SMD $y_i$ for each study, together with the sample sizes $n_i^T$ and $n_i^C$, and the standard errors $\hat\sigma_{y_i^T}$ and $\hat\sigma_{y_i^C}$. Then to compute the $I^2_{\rm ANOVA}$ statistic by (\ref{iqb}), we note that the same procedure as that for MD can still be followed to determine the values of ${\rm MSB}_{\rm MA}$ and $\tilde n$. And moreover, we can also set ${\rm MSW}_{\rm MA}$ directly to 1 since the observed effect sizes are already standardized. For a comprehensive understanding of the model specifications for meta-analysis of SMD, one may refer to \hyperref[appE]{Appendix E}.

\section{The $I^2_{\rm A}$ statistic for the mean}\label{sec4}

Recall that to estimate ${\rm ICC}_{\rm HT}$, Higgins and Thompson (2002) \cite{higgins2002quantifying} proposed the easy-to-implement $I^2$ statistic based on the $Q$ statistic. Nevertheless, for our new measure of heterogeneity ${\rm ICC}_{\rm MA}$, the ANOVA-based estimator in (\ref{iqb}) is somewhat complicated and may not be familiar for meta-analysts. This motivates us to further propose a new estimator of ${\rm ICC}_{\rm MA}$, referred to as $I^2_{\rm A}$, which turns out to have a similar form as the $I^2$ statistic. More specifically, we will present the $I^2_{\rm A}$ statistic in Sections \ref{sec4} to \ref{sec6} for the three effect sizes including the mean, MD and SMD, respectively, followed by real data analyses and simulation studies that compare the numerical performance of the $I^2$, $I^2_{\rm ANOVA}$ and $I^2_{\rm A}$ statistics.
	
By (\ref{icch}), ${\rm ICC}_{\rm HT}$ is defined based on the assumption of a common within-study variance. When this assumption does not hold, as pointed out in the literature, the common within-study variance $\sigma_y^2$ can be replaced by $\tilde\sigma_y^2$ in (\ref{i2}) as an average of the $k$ within-study variances. Now for ${\rm ICC}_{\rm MA}$ in (\ref{icc}), we have $\sigma^2_{\rm pop}=n_i\sigma^2_{y_i}$, and consequently, $w_i=1/\sigma_{y_i}^2=n_i/\sigma^2_{\rm pop}$. Then by letting $\tilde n=(\sum_{i=1}^kn_i-\sum_{i=1}^kn_i^2/\sum_{i=1}^kn_i)/(k-1)$ be the adjusted mean sample size as in (\ref{nbar}), an identity between $\tilde\sigma_y^2$ and $\sigma^2_{\rm pop}$ can be established as follows:
\beqr\label{identity}
\tilde\sigma_y^2=\frac{k-1}{\sum_{i=1}^kw_i-\sum_{i=1}^kw_i^2/\sum_{i=1}^kw_i}=\frac{1}{\tilde n}\sigma^2_{\rm pop}.
\eeqr
In addition, since $E\{Q/(k-1)-1\}=\tau^2/\tilde\sigma_y^2$ by Higgins and Thompson (2002),\cite{higgins2002quantifying} it follows that 
\beqrs
E\left(\frac{Q-(k-1)}{(k-1)\tilde n}\right)=\frac{\tau^2}{\sigma_{\rm pop}^2},
\eeqrs
which leads to a method of moments estimator of $\tau^2/\sigma_{\rm pop}^2$ as $\{Q-(k-1)\}/\{(k-1)\tilde n\}$. Lastly, by noting that  ${\rm ICC}_{\rm MA}=(\tau^2/\sigma_{\rm pop}^2)/(\tau^2/\sigma_{\rm pop}^2+1)$, our plug-in estimator for ${\rm ICC}_{\rm MA}$ is given as
\beqr\label{iq1}
I_{\rm A}^2=\max\left\{\frac{Q-(k-1)}{Q+(k-1)(\tilde n-1)},0\right\},
\eeqr
where, as usual, the maximum operation is kept to avoid a negative estimate. By comparing (\ref{i2}) and (\ref{iq1}), we note that the difference between the $I^2$ and $I_{\rm A}^2$ statistics is purely on the term $(k-1)(\tilde n-1)$, which is a function of the study sample sizes and the number of studies. In the special case when $\tilde n=1$, the two statistics will be exactly the same.

For more insights on how the estimated heterogeneity is adjusted from the relative measure to the absolute measure by the study sample sizes through $(k-1)(\tilde n-1)$, we summarize below a few interesting findings with the proofs in \hyperref[appa]{Appendix F}. 
\begin{enumerate}
	\item[(a)] Firstly, we have $(k-1)(\tilde n-1)\ge 0$, where the equality holds only when $n_i=1$ for all $k$ studies. Consequently, it yields that $$I_{\rm A}^2\leq I^2$$ under all the settings of meta-analysis with at least 2 studies.
	\item[(b)] For the balanced design where all sample sizes are equal to $n$, we have $\tilde n=n$ and moreover $(k-1)(\tilde n-1)=(k-1)(n-1)$. 
	When $k\to \infty$ and $n\to\infty$, we can further show that $Q/\{(k-1)(n-1)\} \to \tau^2/\sigma_{\rm pop}^2=O(1)$, indicating that the two terms in the denominator of (\ref{iq1}) are of the same asymptotic order. Moreover by Slutsky's theorem, $$I_{\rm A}^2 \to {\tau^2/\sigma_{\rm pop}^2 \over \tau^2/\sigma_{\rm pop}^2 +1}= {\rm ICC}_{\rm MA}<1. $$On the other hand, noting that $Q/(k-1)=O(n)$ for any fixed $k\geq 2$, we have $I^2 = 1-O(1/n) \to 1$ as $n\to\infty$. Taken together, it clearly explains why $I^2$ may asymptotically increase to 1, whereas our new $I_{\rm A}^2$ will not.
	\item[(c)] For the unbalanced design, it can also be shown that $(k-1)(\tilde n-1)$ is an increasing function of $n_i$ given that all other sample sizes are fixed, and moreover, $Q/\{(k-1)(\tilde n-1)\}\to \tau^2/\sigma_{\rm pop}^2$ when $k\to\infty$ and all $n_i\to \infty$. Consequently, we still have $I_{\rm A}^2 \to {\rm ICC}_{\rm MA}<1$ as that for the balanced case. In contrast, without the adjustment term $(k-1)(\tilde n-1)$ that is the same order of $Q$, the $I^2$ statistic will again, as observed in the literature, rapidly increase to 1 as the sample sizes are large.
	\item[(d)] Lastly, we can also express the $I_{\rm A}^2$ statistic in (\ref{iq1}) as
	\beqr\label{unequalv}
	I_{\rm A}^2=\frac{\hat\tau^2}{\hat\tau^2+\tilde n\tilde\sigma_y^2},
	\eeqr
	where $\hat\tau^2$ is the DerSimonian-Laird estimator as mentioned in Section \ref{sec1}. When the assumption of a common population variance holds, $\tilde n\tilde\sigma_y^2$ is exactly $\sigma_{\rm pop}^2$ by (\ref{identity}). Otherwise, we can still apply $\tilde n\tilde\sigma_y^2$ to compute the $I_{\rm A}^2$ statistic by (\ref{unequalv}). And similarly as $\tilde \sigma_y^2$ explained in B{\"o}hning et al. (2017),\cite{bohning2017some} $\tilde n\tilde\sigma_y^2$ is asymptotically equivalent to the adjusted mean sample size multiplied by the harmonic mean of the within-study variances.
\end{enumerate}

\subsection{Real data analysis}\label{sec4.2}
This section applies a real data example to illustrate the $I^2_{\rm A}$ statistic, and compare it with the $I^2_{\rm ANOVA}$ and $I^2$ statistics, for quantifying the heterogeneity between the studies. Specifically, we revisit a previous meta-analysis conducted by Jeong et al. (2014) \cite{jeong2014efficacy}, which investigated the stem cell-based therapy as a novel approach for the stroke treatment. Among various measures of efficacy and safety, we focus on the point difference in the National Institutes of Health Stroke Scale as the outcome. The summary data for a total of $k=10$ studies are presented in Table \ref{data}.

Treating $\hat\sigma^2_{y_i}$ in Table \ref{data} as the true values of $\sigma^2_{y_i}$, we have $\sum_{i=1}^{10}w_i=7.68$ and $\sum_{i=1}^{10} w_iy_i=-43.39$. This leads to Cochran's $Q$ statistic in (\ref{qs}) as $Q=106.26$. Moreover, by formula (\ref{i2}),
	\beqrs
	I^2=\max\left\{\frac{106.26-(10-1)}{106.26},0\right\}=0.92.
	\eeqrs
	In addition, the adjusted mean sample size can be computed as $\tilde n=8.97$. Then by formula (\ref{iq1}), it yields that
		\beqrs
	I_{\rm A}^2=\max\left\{\frac{106.26-(10-1)}{106.26+(10-1)(8.97-1)},0\right\}=0.55.
	\eeqrs 
	Lastly, to also include the $I_{\rm ANOVA}^2$ statistic for additional comparison, we have $\bar y = \sum_{i=1}^{10} n_iy_i / \sum_{i=1}^{10} n_i = -7.55$, ${\rm MSB}_{\rm MA}=189.83$, and ${\rm MSW}_{\rm MA}=25.81$.
	Consequently, by formula (\ref{iqb}),
	\beqrs
	I^2_{\rm ANOVA}=\max\left\{\frac{189.83-25.81}{189.83+(8.97-1)\times 25.81},0\right\}=0.41.
	\eeqrs
	To conclude, unlike the $I^2$ statistic that is very close to 1, the values of $I^2_{\rm A}$ and $I^2_{\rm ANOVA}$ are close to each other and they both indicate a moderate heterogeneity for the ten studies.
\begin{figure}[htp!]
	\begin{center}
		\begin{tabular}{cc}
			\psfig{figure=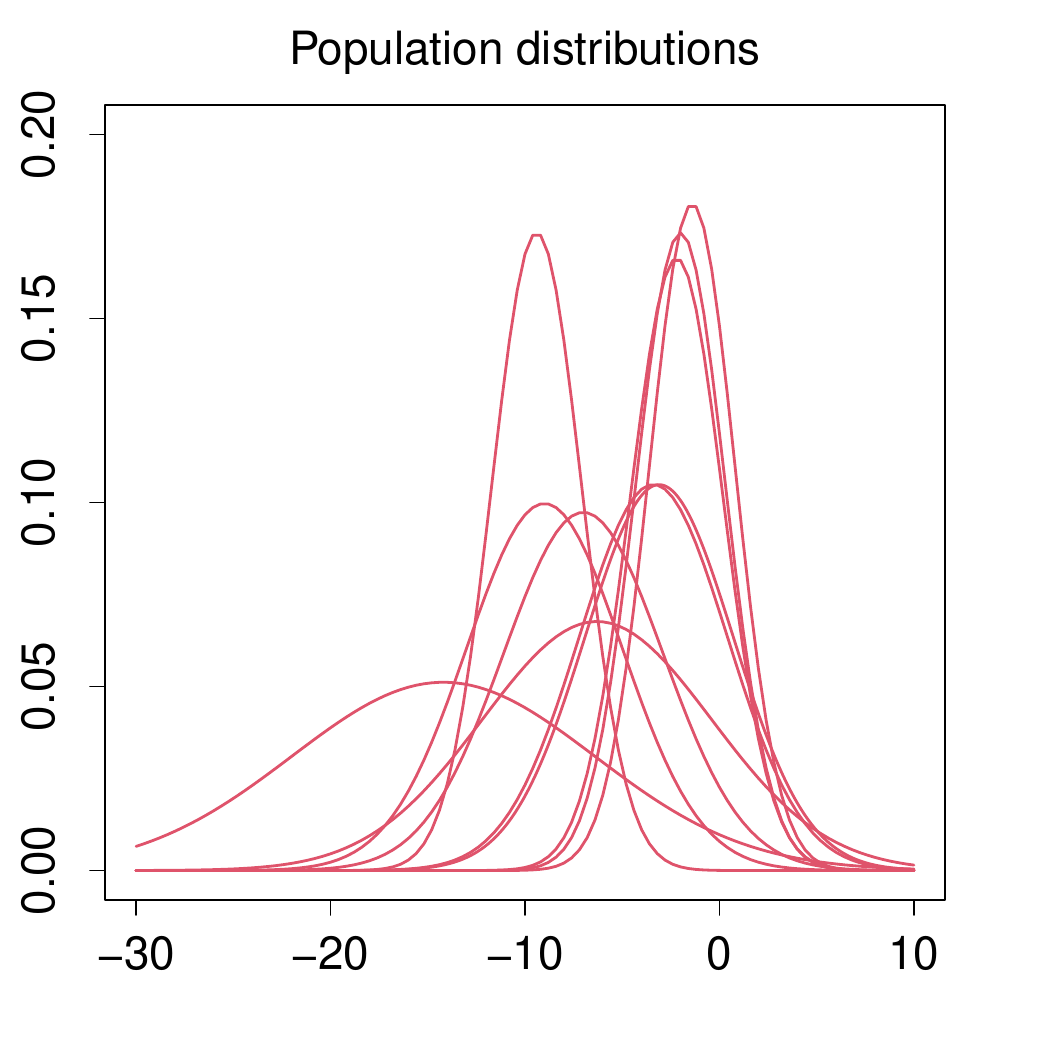,width=2.8in,angle=0}&
			\psfig{figure=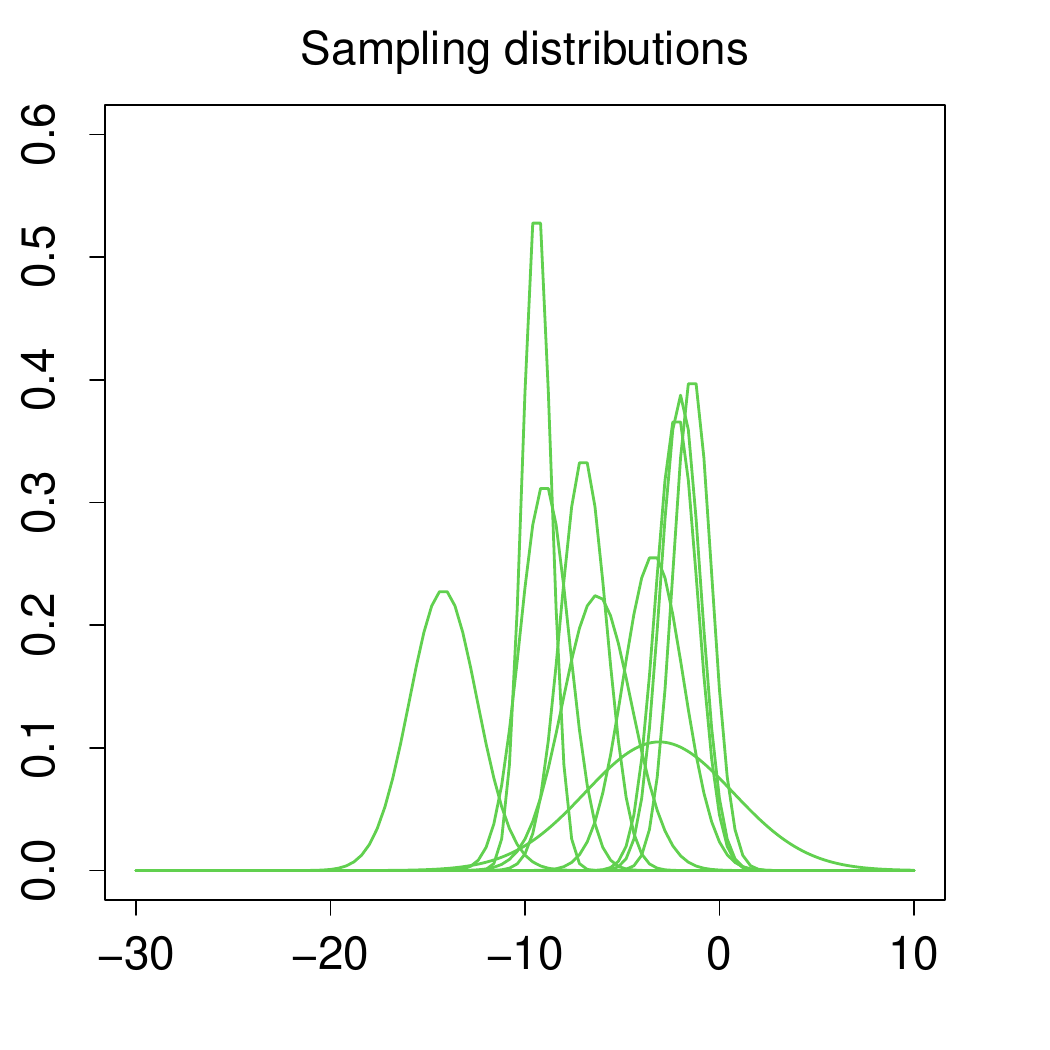,width=2.8in,angle=0}
		\end{tabular}{\caption{Population distributions of the ten studies and the sampling distributions of the observed effect sizes from Jeong et al. (2014). For each study, the population distribution is assumed to be normal with mean $y_i$ and variance $n_i\hat\sigma_{y_i}^2$. The sampling distribution of the effect size is assumed to be normal with mean $y_i$ and variance $\hat\sigma_{y_i}^2$.}\label{figapp}}
	\end{center}
\end{figure}

To further compare the $I^2_{\rm A}$ statistic and the $I^2$ statistic, as a common practice we assume that the ten studies are all normally distributed. Then by the reported means and variances, we plot their respective population distributions and the sampling distributions of the observed effect sizes in Figure \ref{figapp} for visualization. From the figure, it is evident that the ten studies are not very heterogeneous since most of the study populations are largely overlapped in the range roughly from -15 to 5, corresponding to a measure of 0.55 for the $I^2_{\rm A}$ statistic. By contrast, the sampling distributions of the observed effect sizes are less overlapped with each other, indicating a much higher heterogeneity at 0.92 by the $I^2$ statistic.

\subsection{Numerical results}\label{sec4.3}
To compare the numerical performance of the three statistics, we now conduct simulations based on the random-effects model (\ref{REM}) with $\mu=0$ and $\sigma^2=100$. For the between-study variance, we consider $\tau^2=9$ or 90 that corresponds to ${\rm ICC}_{\rm MA}$ as $9/(9+100)=0.083$ or $90/(90+100)=0.474$, respectively. Let also $k=3$ or 10 to represent the small or large number of studies included in the meta-analysis. For the sample size of each study, we consider the unbalanced design with the sample size of the $i$th study being $i*n$, where $i=1,\dots,k$ and the common $n$ ranges from 10 to 90. With each of the above settings, we then generate the raw data from model (\ref{REM}) and report the summary data $y_i$ and $\hat\sigma_{y_i}^2$ for the $k$ studies. Finally with $M=10,000$ repetitions, we present the boxplots of the $I^2_{\rm A}$, $I^2_{\rm ANOVA}$ and $I^2$ statistics, together with their mean values, in Figure \ref{c1}. From the figure, it is evident that the $I^2$ statistic has an increasing trend with the sample size $n$. This is consistent with what was observed in R{\"u}cker et al. (2008) \cite{rucker2008undue} that the $I^2$ statistic always increases rapidly to 1 when the sample sizes are large. By contrast, with each solid line representing the heterogeneity ${\rm ICC}_{\rm MA}$ between the study populations, we note that  $I^2_{\rm A}$ and $I^2_{\rm ANOVA}$ are not influenced by the sample size and also provide comparable estimates for ${\rm ICC}_{\rm MA}$ in terms of both bias and variance. And more interestingly, they are able to perform even better when the number of studies $k$ is larger.
\begin{figure}[htp!]
	\begin{center}
		\begin{tabular}{cc}
			\psfig{figure=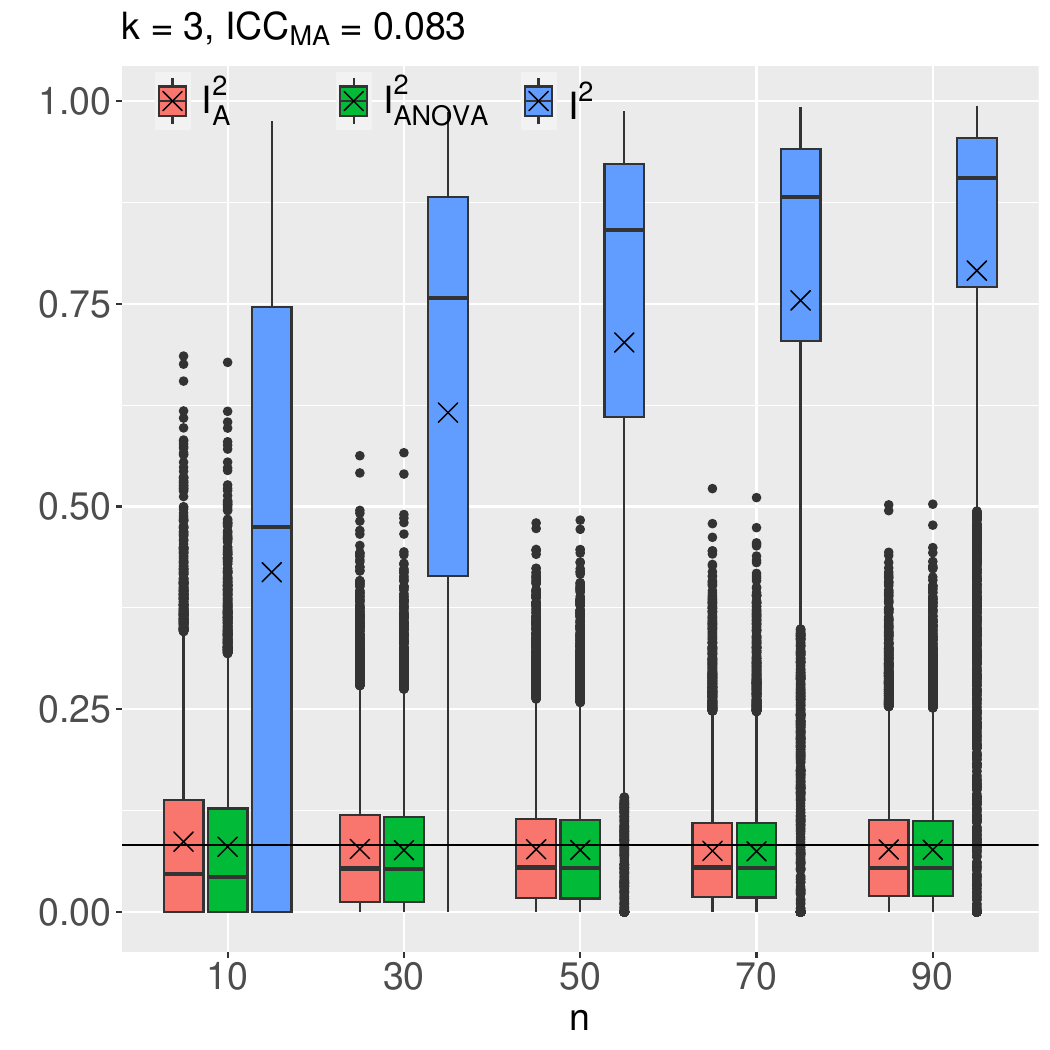,width=3.3in,angle=0}&
			\psfig{figure=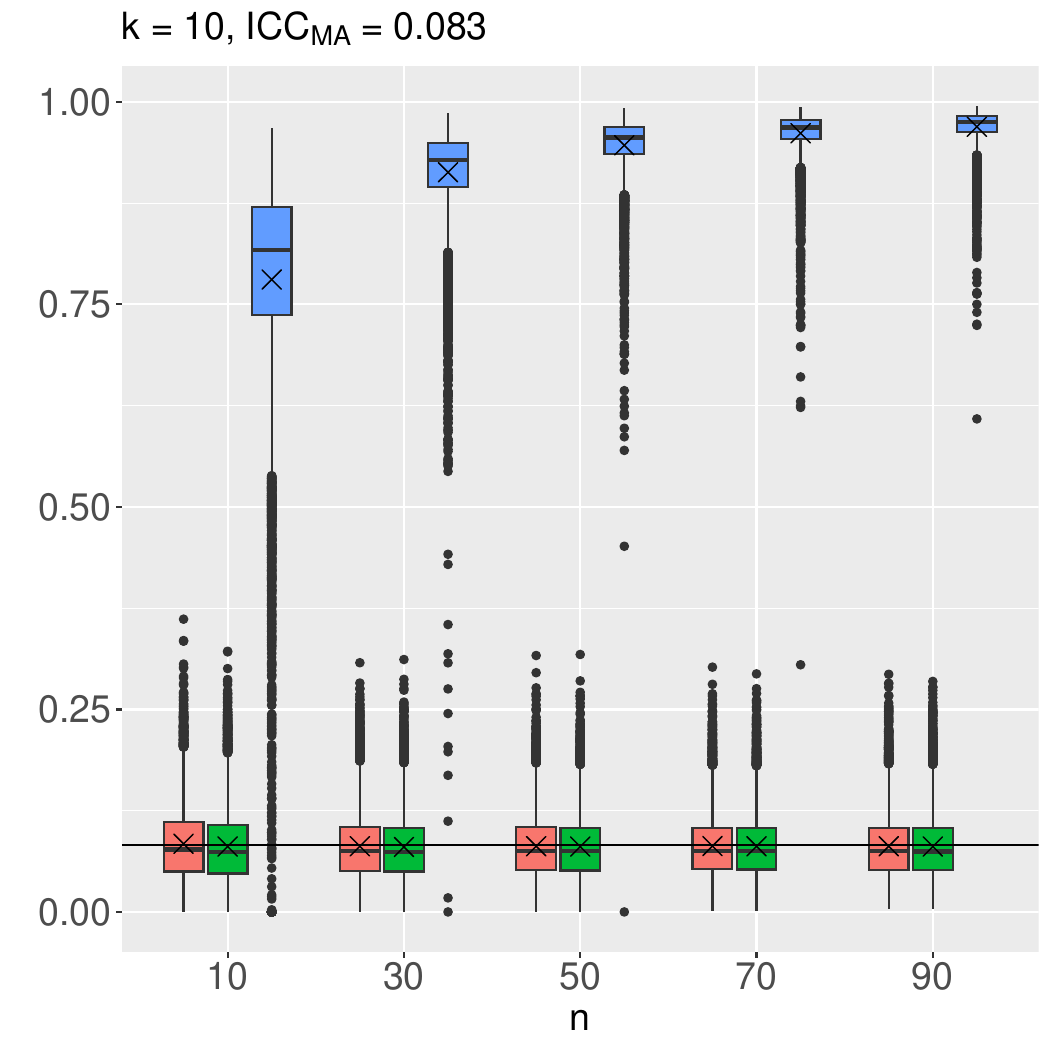,width=3.3in,angle=0}\\
			\psfig{figure=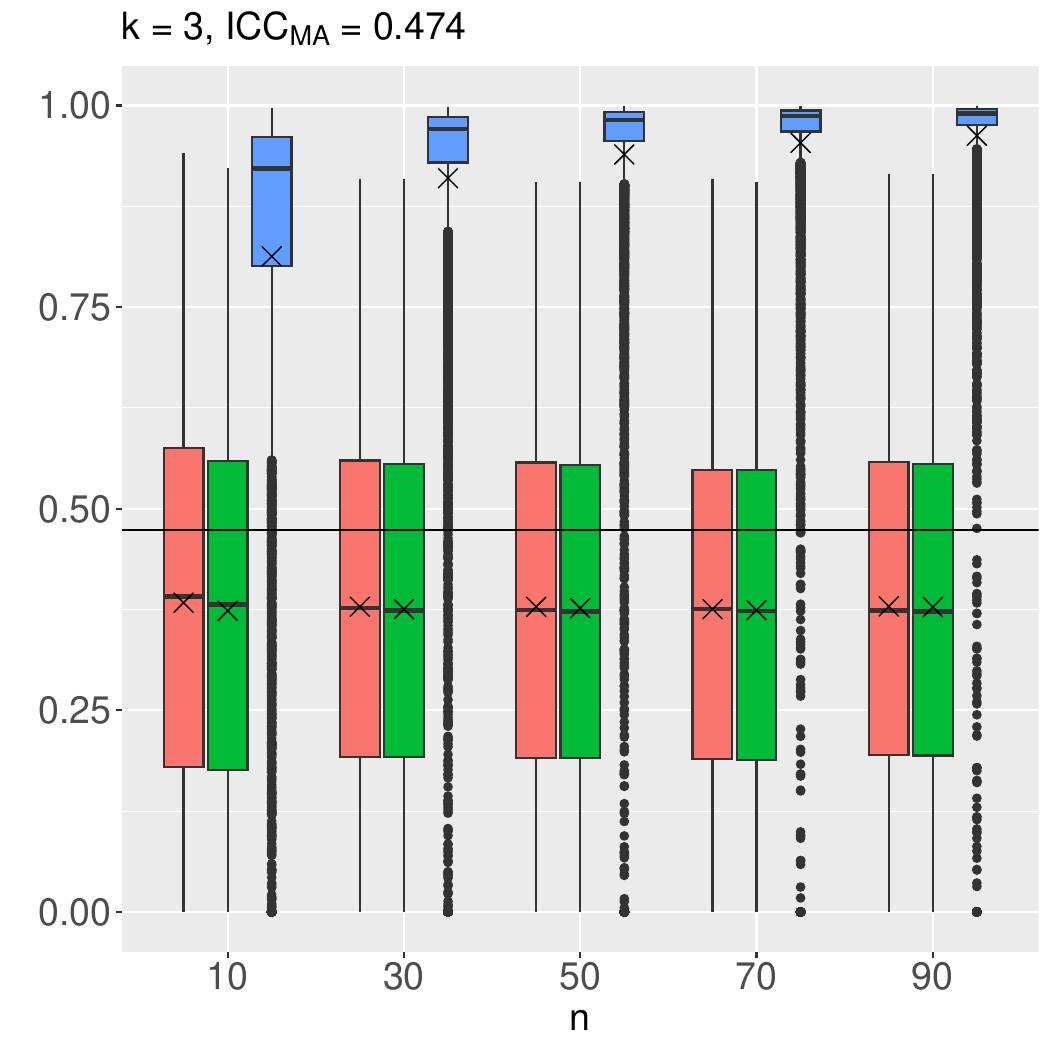,width=3.3in,angle=0}&
			\psfig{figure=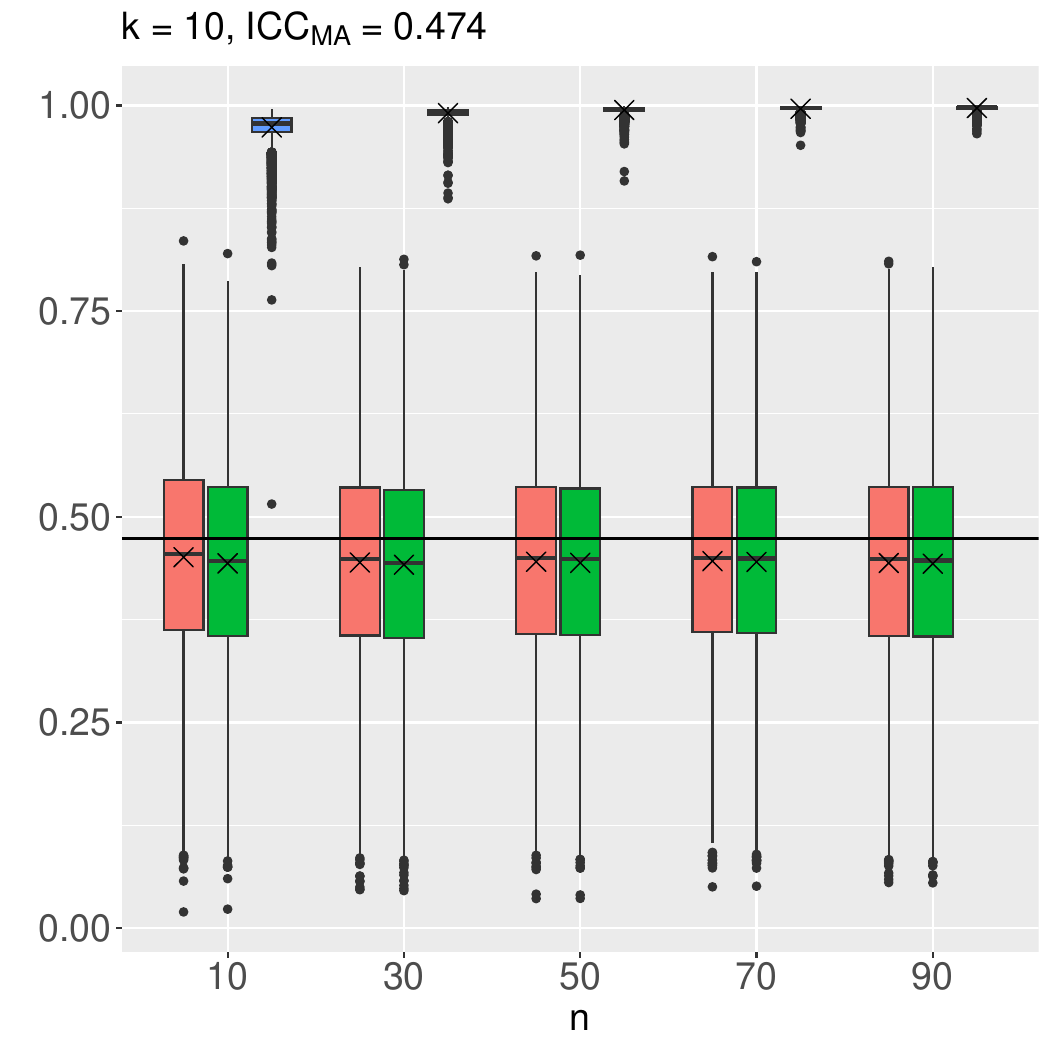,width=3.3in,angle=0}
		\end{tabular}{\caption{Boxplots of the three statistics for the mean with 10,000 repetitions. The red boxes represent the $I^2_{\rm A}$ statistic, the green boxes represent the $I^2_{\rm ANOVA}$ statistic, and the blue boxes represent the $I^2$ statistic. The crosses on each box are the mean values of the 10000 repetitions. The solid lines stand for the absolute heterogeneity ${\rm ICC}_{\rm MA}$.}\label{c1}}
	\end{center}
\end{figure}

Our next simulation is to examine the scenario where the common population variance assumption does not hold. For the $i$th study, we generate the population variance in each replication from a gamma distribution with shape parameter 25 and scale parameter 4, or equivalently, with mean 100 and variance 400. All other settings remain the same as in the previous simulation. Note that for this case, the true value of ${\rm ICC}_{\rm MA}$ will vary across replications because of the randomness in the population variance. Lastly, we present in Figure \ref{app} the simulation results together with an ${\rm ICC}_{\rm MA}$ value using $\sigma_{\rm pop}^2=100$, which, as can be seen, are similar to those for the common population variance.

\section{The $I^2_{\rm A}$ statistic for the mean difference}\label{sec5}
	
In addition to the mean considered in Section \ref{sec4}, two other commonly used effect sizes for continuous outcomes are the mean difference (MD) and the standardized mean difference (SMD). Following this, we will describe the $I^2_{\rm A}$ statistic for MD in this section and then for SMD in Section \ref{sec6}.
	
For a meta-analysis of MD, each study has two treatment arms including the treatment group and the control group. The summary statistics for each study then consist of the observed MD $y_i$, the sample sizes $n_i^T$ and $n_i^C$, and the standard errors $\hat\sigma_{y_i^T}$ and $\hat\sigma_{y_i^C}$ associated with the treatment and control groups.
Given these summary statistics, the estimated variance of the mean difference $y_i$ is $\hat\sigma_{y_i}^2=\hat\sigma_{y_i^T}^2+\hat\sigma_{y_i^C}^2$, which is also treated as the true within-study variance of $y_i$ as $\sigma_{y_i}^2=\sigma_{y_i^T}^2+\sigma_{y_i^C}^2$. In what follows, we describe the derivation procedure for the $I^2_{\rm A}$ statistic in the meta-analysis of MD, which extends from that for the meta-analysis of the mean. Following Section 4, we let $\sigma_{y_i}^2=(1/n_i^T+1/n_i^C)\sigma_{\rm pop}^2$, and moreover define the effective sample size for each study as $n_i=1/(1/n_i^T+1/n_i^C)$. This leads to the inverse-variance weights as $w_i=1/\sigma_{y_i}^2=n_i/\sigma_{\rm pop}^2$, and consequently, the $I^2_{\rm A}$ statistic for the meta-analysis of MD can again be expressed as
\beqr\label{iq12}
I_{\rm A}^2=\max\left\{\frac{Q-(k-1)}{Q+(k-1)(\tilde n-1)},0\right\}.
\eeqr
In other words, the newly derived $I_{\rm A}^2$ shares the same expression as in (\ref{iq1}) but with a different definition of $n_i$. 
Moreover, the expression in (\ref{unequalv}) also applies to the $I_{\rm A}^2$ statistic for MD. Just as property (d) applies to the $I_{\rm A}^2$ statistic for the mean, the $I_{\rm A}^2$ statistic for MD is also applicable and interpretable when the population variances differ. 
For more details on the model assumptions for MD, please refer to \hyperref[appD]{Appendix D}.

\subsection{Real data analysis}\label{sec5.1}
To exemplify the $I^2_{\rm A}$ statistic for MD, we revisit a meta-analysis conducted in Avery et al. (2022) \cite{avery2022efficacy}. This study explores the effect of interventions to taper long term opioid treatment for chronic non-cancer pain. Among the several interventions, we consider the effect of acupuncture. For each study, the observed effect size is the mean difference of reduced  opioid dose. For easy reference, we provide the summary data for the three studies in Table \ref{data2}.

By Table \ref{data2}, the estimated effect sizes $y_i$ for the three studies are $(32.0, -4.8, -14.8)$ and the within-study variances of $y_i$ are $(272.14,20.29,65.48)$, yielding that $\sum_{i=1}^3w_i=0.07$ and $\sum_{i=1}^3w_iy_i=0.35$. Moreover, Cochran's $Q$ statistic is given as $Q=6.50$. Further by formula (\ref{i2}), we have
\beqrs
I^2=\max\left\{\frac{6.50-(3-1)}{6.50},0\right\}=0.69.
\eeqrs
To compute the $I^2_{\rm A}$ statistic, the effective sample sizes $n_i$ for the three studies can be derived as 3.60, 26.67 and 8.74, respectively. This leads to the adjusted mean sample size as $\tilde n=9.24$, and moreover by  formula (\ref{iq1}),
\beqrs
I_{\rm A}^2=\max\left\{\frac{6.50-(3-1)}{6.50+(3-1)(9.24-1)},0\right\}=0.20.
\eeqrs 
Lastly, noting that $\bar y=-3.65$, ${\rm MSB}_{\rm MA}=2848.76$ and ${\rm MSW}_{\rm MA}=586.93$, we apply formula (\ref{iqb}) and it yields that
\beqrs
I^2_{\rm ANOVA}=\max\left\{\frac{2848.76-586.93}{2848.76+(9.24-1)\times 586.93},0\right\}=0.29.\eeqrs
To conclude, it is again evident that the values of $I^2_{\rm A}$ and $I^2_{\rm ANOVA}$ are close to each other, and both of them are significantly different from the value of $I^2$.

To further compare the $I^2_{\rm A}$, $I^2_{\rm ANOVA}$ and $I^2$ statistics, we also plot the population distributions for the three studies and the sampling distributions of the observed effect sizes in Figure \ref{figmd} for visualization. We note that two of the populations are largely overlapped with little heterogeneity, whereas the third population is moderately deviated. Given this, we conclude that the heterogeneity between the three studies may not be substantial overall, if measured by the $I^2_{\rm A}$ statistic. By contrast, the $I^2$ statistic concludes a very substantial heterogeneity between the sampling distributions of the observed effect sizes.
\begin{figure}[htp!]
	\begin{center}
		\begin{tabular}{cc}
			\psfig{figure=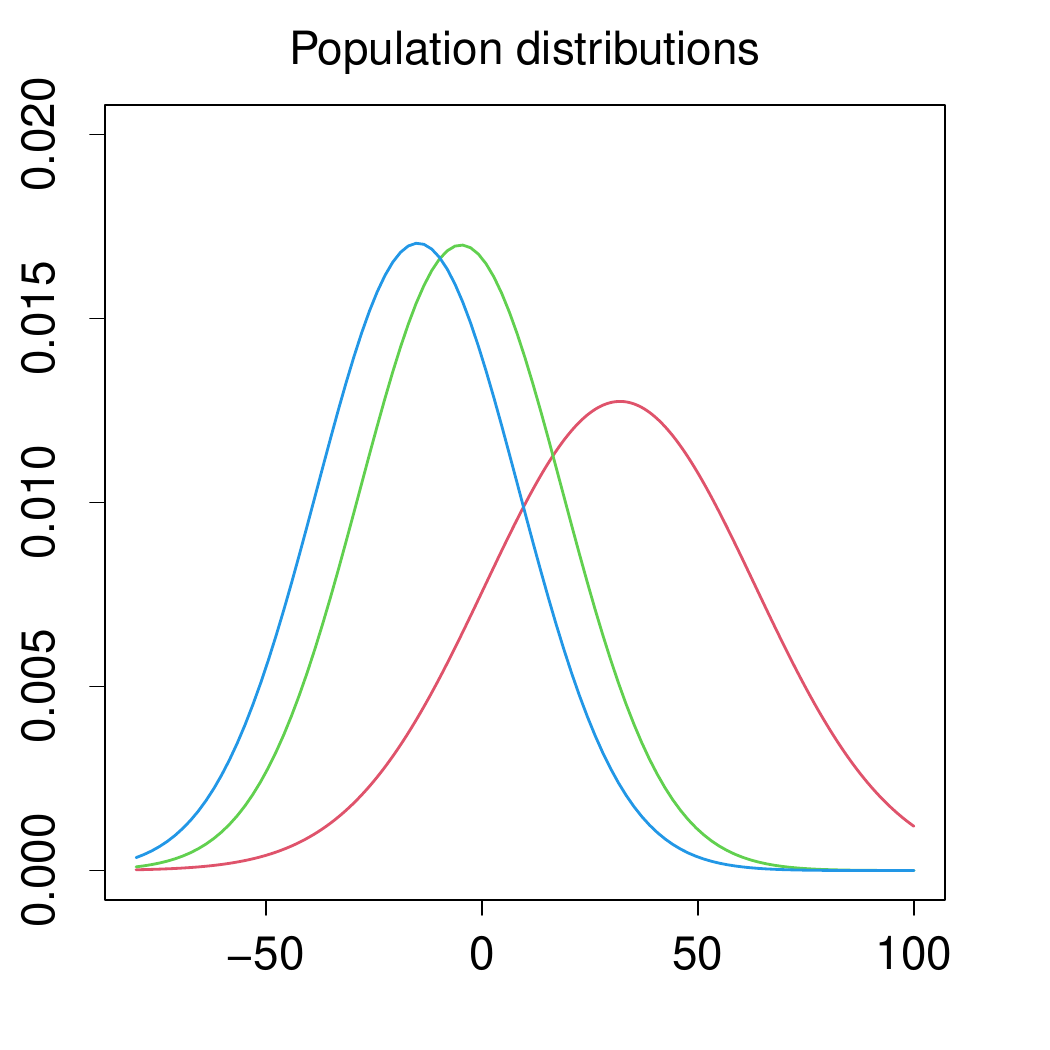,width=2.8in,angle=0}&
			\psfig{figure=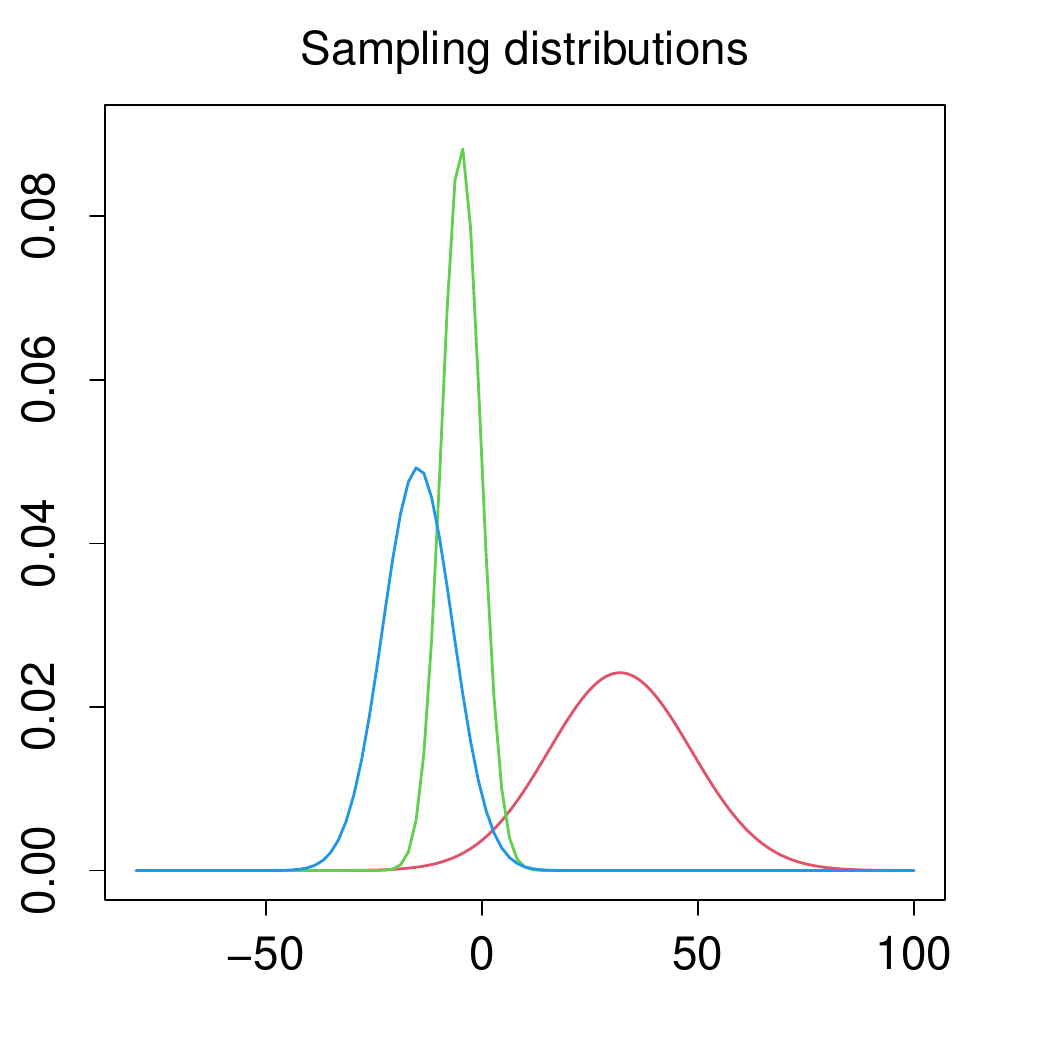,width=2.8in,angle=0}
		\end{tabular}{\caption{Population distributions of the three studies and the sampling distributions of the observed effect sizes with blue for Zheng (2008), green for Zheng (2019), and red for Jackson (2021). For each study, the population distribution is assumed to be normal with mean $y_i^T-y_i^C$ and variance $\{n_i^T(n_i^T-1)\hat\sigma_{y_i^T}^2+n_i^C(n_i^C-1)\hat\sigma_{y_i^C}^2\}/(n_i^T+n_i^C-2)$. The sampling distribution of the effect size is assumed to be normal with mean $y_i^T-y_i^C$ and variance $\hat\sigma_{y_i^T}^2+\hat\sigma_{y_i^C}^2$.}\label{figmd}}
	\end{center}
\end{figure}

\subsection{Numerical results}\label{sec5.2}
To numerically compare the $I^2_{\rm A}$, $I^2_{\rm ANOVA}$ and $I^2$ statistics, we generate the data from two-arm studies as follows:
\beqr\label{ipdmd}
\begin{aligned}
	y_{ij}^T&=\mu^T+\delta_i^T+\xi_{ij}^T,\quad j=1,\ldots,n_i^T,\\
	y_{ij'}^C&=\mu^C+\delta_i^C+\xi_{ij'}^C,\quad j'=1,\ldots,n_i^C,
\end{aligned}
\eeqr
where $\xi_{ij}^T$ and $\xi_{ij'}^C$ are i.i.d. normal random errors with mean 0 and common variance $\sigma^2$.
For a more detailed description of model (\ref{ipdmd}), one may refer to \hyperref[appD]{Appendix D}.

Without loss of generality, we set $\mu^T=\mu^C=0$ and $\sigma^2=1$. We also generate $\delta_i^T$ and $\delta_i^C$ independently from $N(0,0.045)$ or $N(0,0.45)$. With the observed effect sizes being $\sum_{j=1}^{n^T}y_{ij}^T/n_i^T-\sum_{j'=1}^{n^C}y_{ij'}^C/n_i^C$, the between-study variance is $\tau^2=0.09$ or $0.9$, yielding an ${\rm ICC}_{\rm MA}$ value of $0.083$ or $0.474$, respectively. For other settings, we consider $k=3$ or 10 to represent a small or large number of studies within the meta-analysis, and the sample sizes of both treatment arms, $n_i^T$ and $n_i^C$, to be identical. We further let the sample sizes for both arms of the $i$th study be $i*n$, where $i$ ranges from 1 to $k$, and $n$ varies from 10 to 90.
Then for each simulation setting, we proceed to generate the raw data and compute the summary statistics, including $y_i^T$, $y_i^C$, $\hat\sigma_{y_i^T}^2$ and $\hat\sigma_{y_i^C}^2$, for each of the $k$ studies. Finally with $M=10,000$ repetitions, we present the boxplots of the $I^2_{\rm A}$, $I^2_{\rm ANOVA}$ and $I^2$ statistics and also visualize their mean values in Figure \ref{c2}. Based on the numerical results, it is clear again that the $I^2$ statistic monotonically increases with the sample size $n$, whereas the $I^2_{\rm A}$ and $I^2_{\rm ANOVA}$ statistics are not affected by the sample size. Moreover, the two new statistics also yield similar estimates for ${\rm ICC}_{\rm MA}$ in most settings, as well as provide a better performance when the number of studies $k$ increases.
\begin{figure}[htp!]
	\begin{center}
		\begin{tabular}{cc}
			\psfig{figure=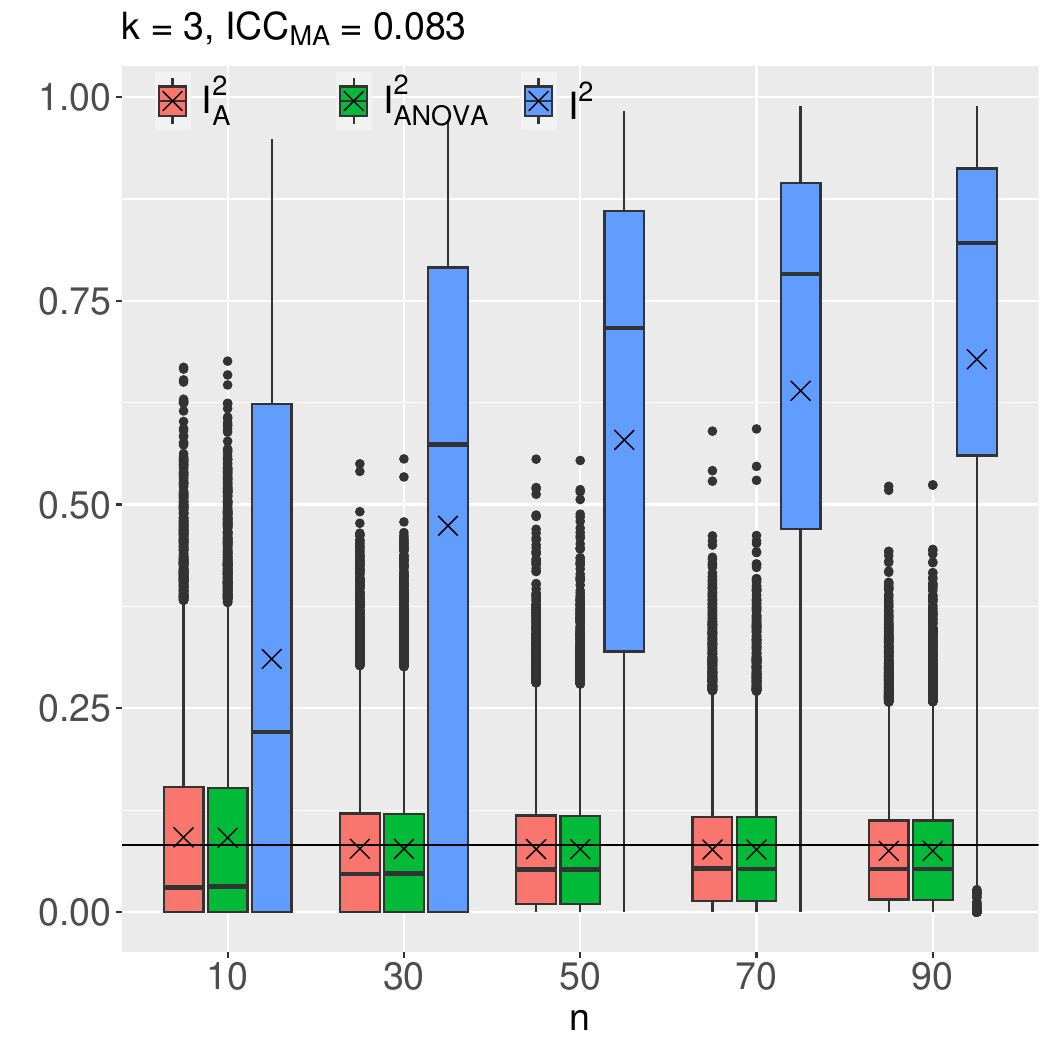,width=3.3in,angle=0}&
            \psfig{figure=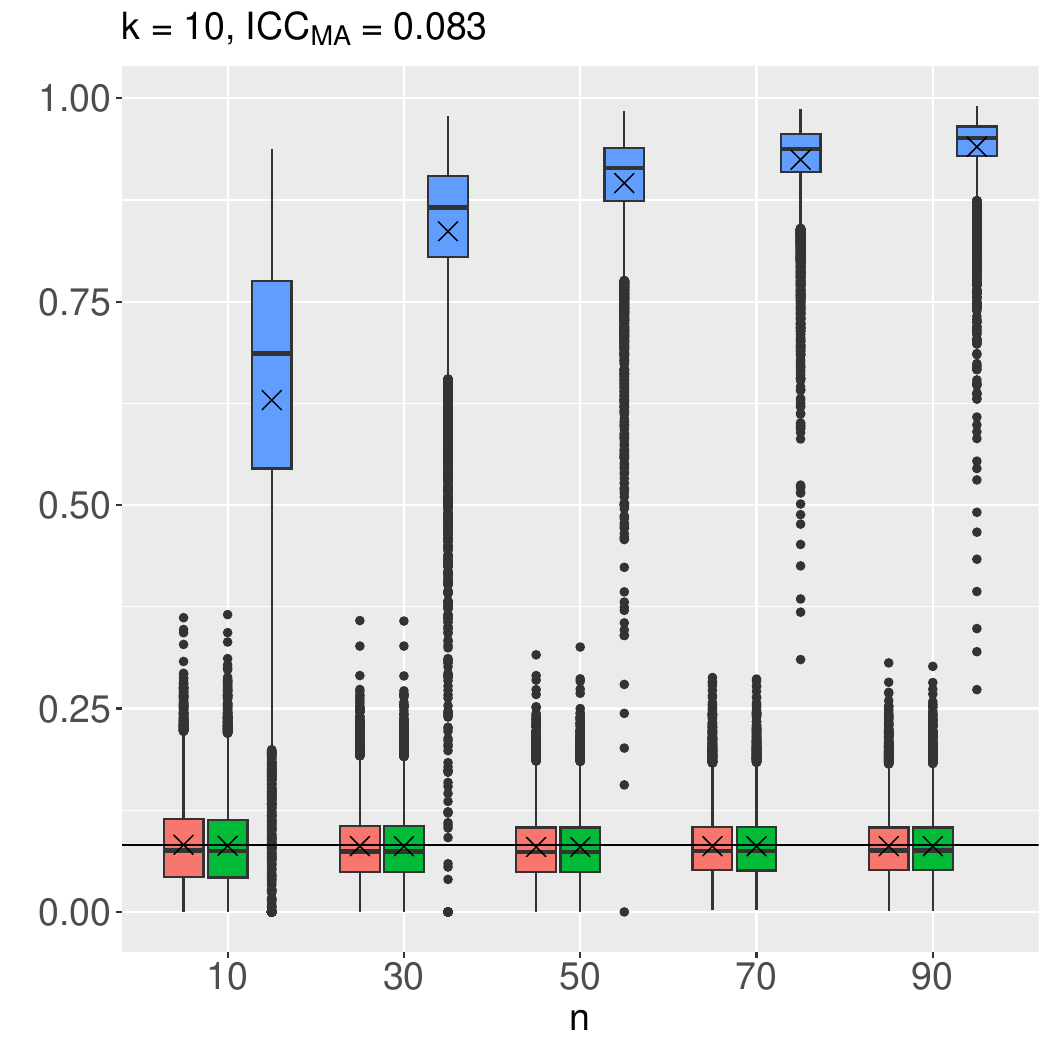,width=3.3in,angle=0}\\
            \psfig{figure=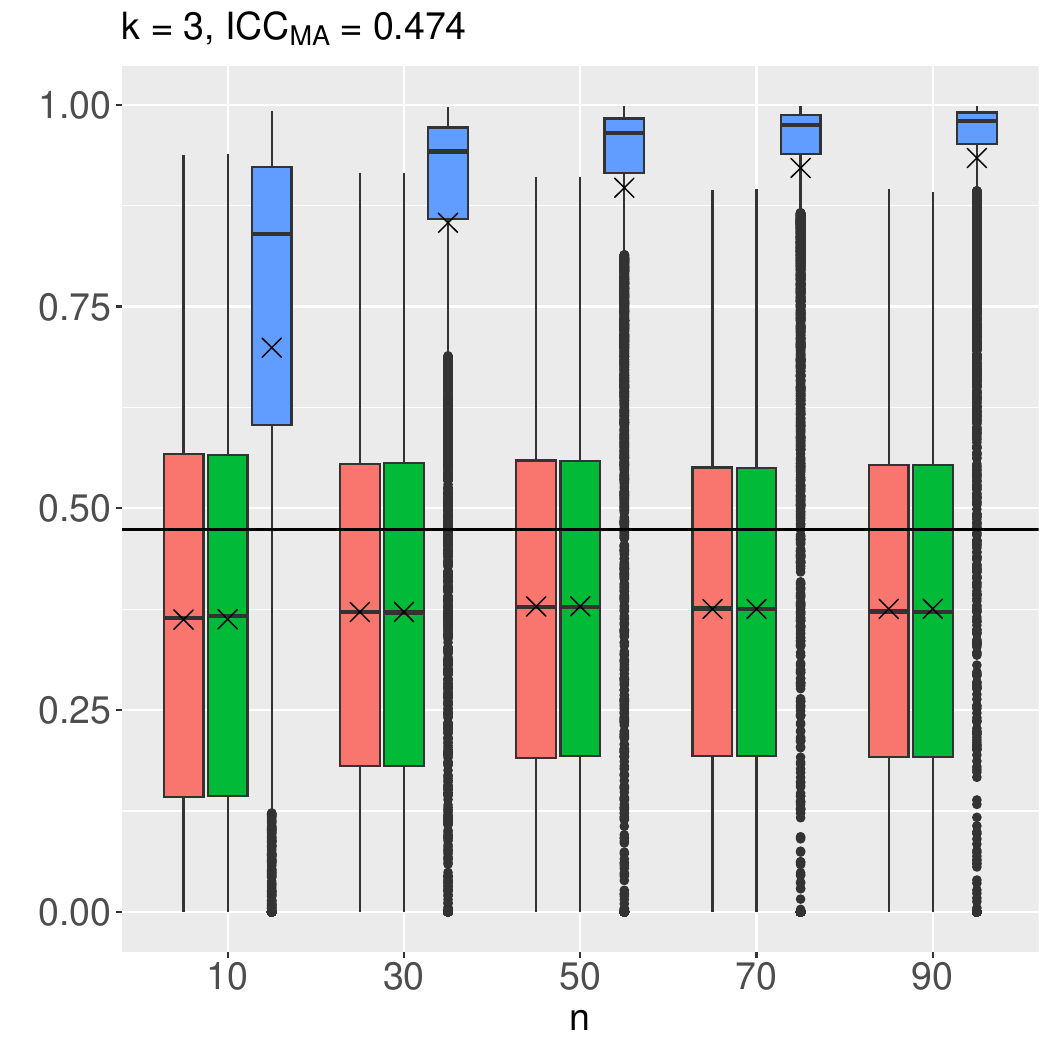,width=3.3in,angle=0}&
            \psfig{figure=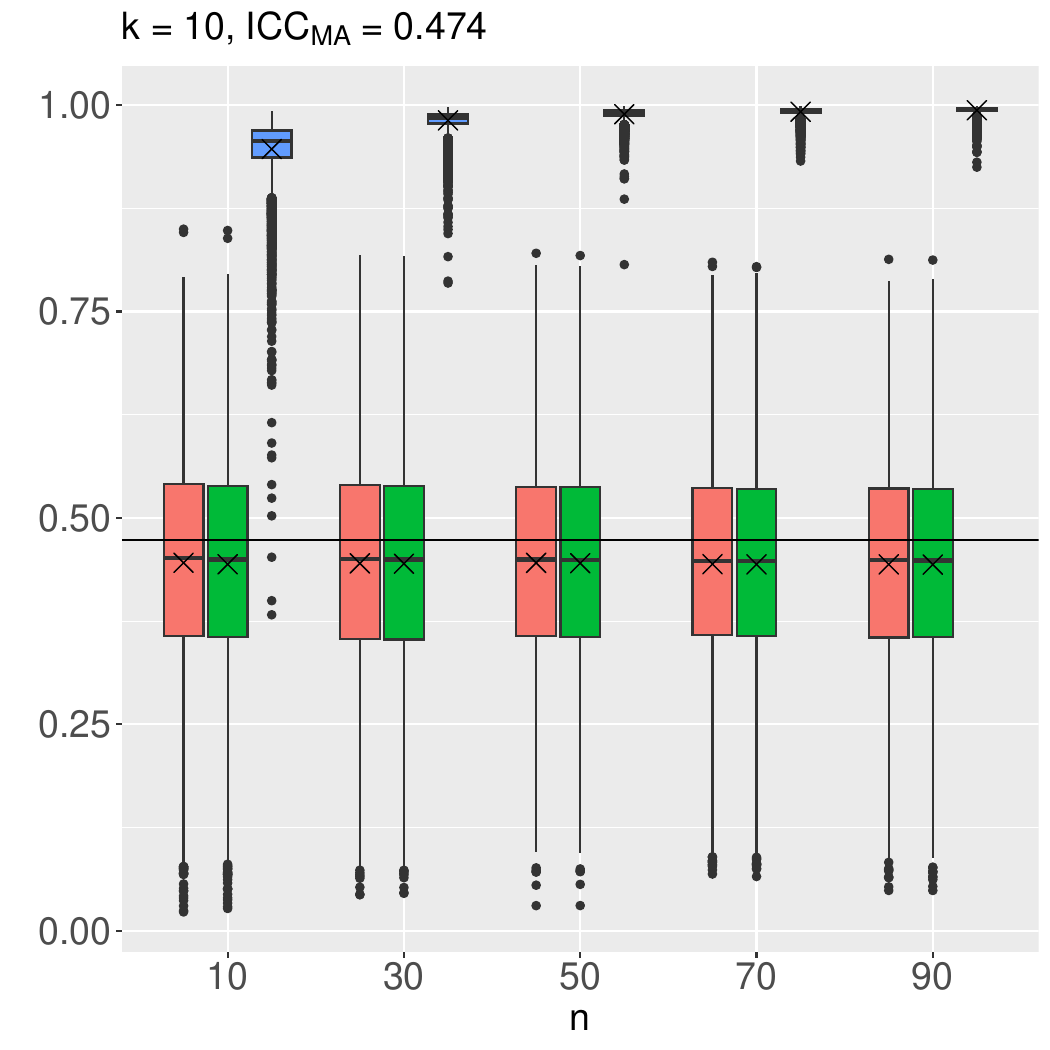,width=3.3in,angle=0}
		\end{tabular}{\caption{Boxplots of the three statistics for the mean difference with 10,000 repetitions. The red boxes represent the $I^2_{\rm A}$ statistic, the green boxes represent the $I^2_{\rm ANOVA}$ statistic, and the blue boxes represent the $I^2$ statistic. The crosses on each box are the mean values of the 10000 repetitions. The solid lines stand for the absolute heterogeneity ${\rm ICC}_{\rm MA}$.}\label{c2}}
	\end{center}
\end{figure}

\section{The $I^2_{\rm A}$ statistic for the standardized mean difference}\label{sec6}

In addition to the mean difference (MD), another commonly used effect size for continuous outcomes in two-arm studies is the standardized mean difference (SMD). SMD is particularly useful when the assumption of equal population variances across different studies cannot be made. In such cases, the mean difference in each study is standardized to a uniform scale, ensuring comparability for the subsequent meta-analysis. Consequently, the estimated standardized mean difference $y_i$ can be viewed as the observed mean difference of two population arms, both with a variance of 1, indicating $\sigma_{\rm pop}^2=1$.
For a comprehensive understanding of the model specifications, one may refer to \hyperref[appE]{Appendix E}. Lastly, to estimate ${\rm ICC}_{\rm MA}$ for SMD, we use the DerSimonian-Laird estimator to estimate $\tau^2$ and set $\sigma_{\rm pop}^2$ to 1 in formula (\ref{icc}), yielding the $I^2_{\rm A}$ statistic as
	\beqr\label{iq2}
	I^2_{\rm A}=\max\left\{\frac{Q-(k-1)}{Q+(k-1)(\tilde w-1)},0\right\},
	\eeqr
where $\tilde w=(\sum_{i=1}^kw_i-\sum_{i=1}^kw_i^2/\sum_{i=1}^kw_i)/(k-1)$.

\subsection{Real data analysis}\label{sec6.1}
To assess the performance of the $I^2_{\rm A}$ statistic in quantifying the heterogeneity for SMD, we revisit the real data example presented in Section \ref{sec5.1}. With the summary data provided in 
Table \ref{data2}, we first compute the estimated SMD and its corresponding variance for each study. Two commonly used statistics for estimating SMD are Cohen's $d$ \cite{cohen2013statistical} and Hedges' $g$ \cite{hedges1981distribution}. For a detailed guide on computing Cohen's $d$ and Hedges' $g$, one may refer to Lin and Aloe (2021) \cite{lin2021evaluation}. In this section, we employ Hedges' $g$ that derives an unbiased estimate for SMD.

Following the formulas provided by Lin and Aloe (2021) \cite{lin2021evaluation}, we can derive the estimated SMDs for the three studies as $(0.96,-0.20, -0.62)$, and the within-study variances of $y_i$ as $(0.31,0.04, 0.12)$. Further by $\sum_{i=1}^3w_i=38.12$ and $\sum_{i=1}^3w_iy_i=-7.43$, Cochran's $Q$ statistic can be computed as $Q=5.83$. Thus by formula (\ref{i2}),
	\beqrs
	I^2=\max\left\{\frac{5.83-(3-1)}{5.83},0\right\}=0.66.
	\eeqrs
Noting also that $\sum_{i=1}^3w_i^2=784.06$ and $\tilde w=8.78$, by formula (\ref{iq2}) we have
\beqrs
I_{\rm A}^2=\max\left\{\frac{5.83-(3-1)}{5.83+(3-1)(8.78-1)},0\right\}=0.18.
\eeqrs 
Lastly, to compute the $I^2_{\rm ANOVA}$ statistic, we first derive the effective sample sizes $n_i$ for the three studies as 3.60, 26.67 and 8.74, respectively, Moreover, we have $\bar y=-0.19$, ${\rm MSB}_{\rm MA}=3.19$, and ${\rm MSW}_{\rm MA}=1$. Then by formula (\ref{iqb}),
\beqrs
I^2_{\rm ANOVA}=\max\left\{\frac{3.19-1}{3.19+(9.24-1)\times 1},0\right\}=0.19.
\eeqrs

To further compare the three statistics, we plot the scaled population distributions for the three studies and the sampling distributions of the observed effect sizes in Figure \ref{figsmd}. Specifically, with SMDs as the effect sizes, all the scaled populations have a common variance of 1. Moreover, we apply the estimated SMDs as the population means.
Compared to Figure \ref{figmd}, the three scaled populations in Figure \ref{figsmd} get more close to each other, resulting in smaller values for the $I^2_{\rm A}$ and $I^2_{\rm ANOVA}$ statistics. On the other hand, a measure of 0.66 for the $I^2$ statistic indicates a large heterogeneity between the observed effect sizes.
\begin{figure}[htp!]
	\begin{center}
		\begin{tabular}{cc}
			\psfig{figure=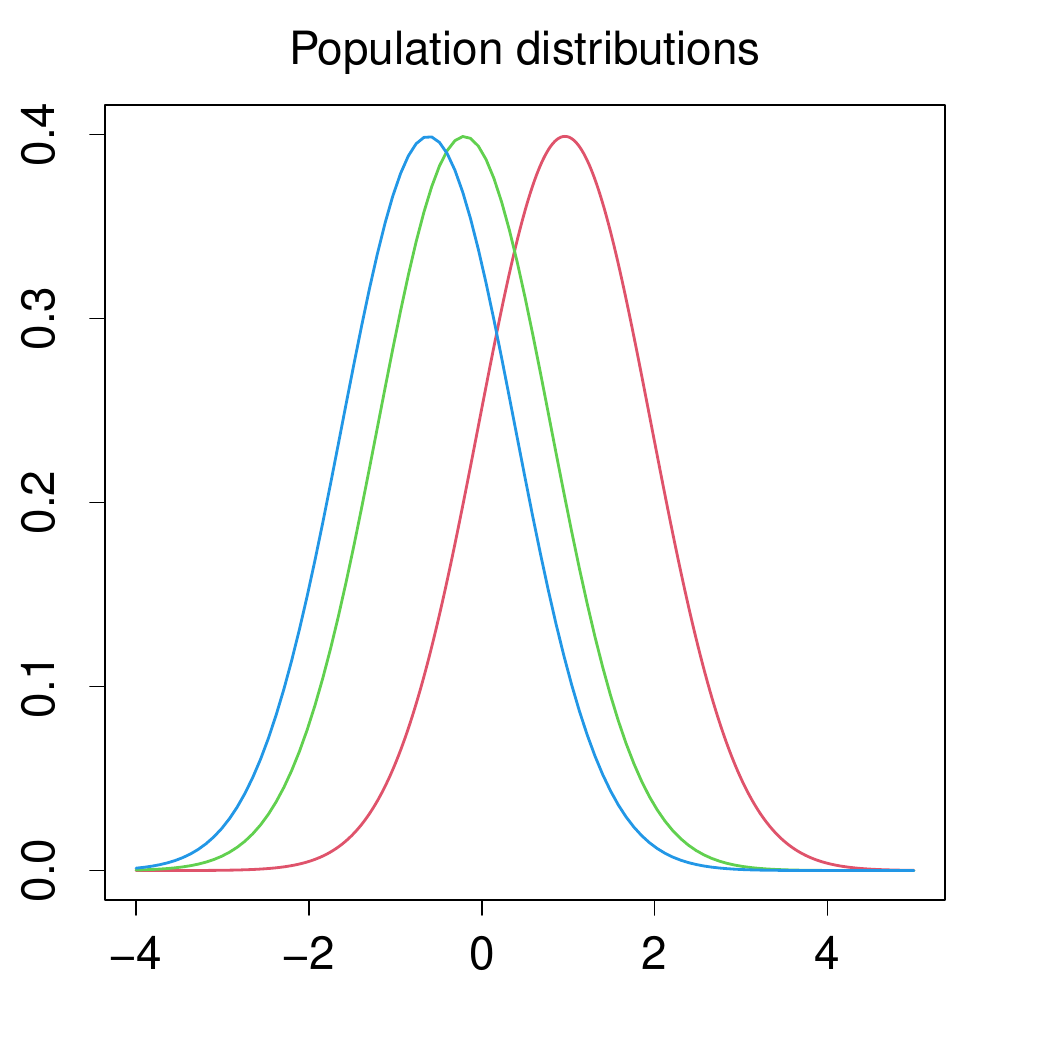,width=2.8in,angle=0}&
			\psfig{figure=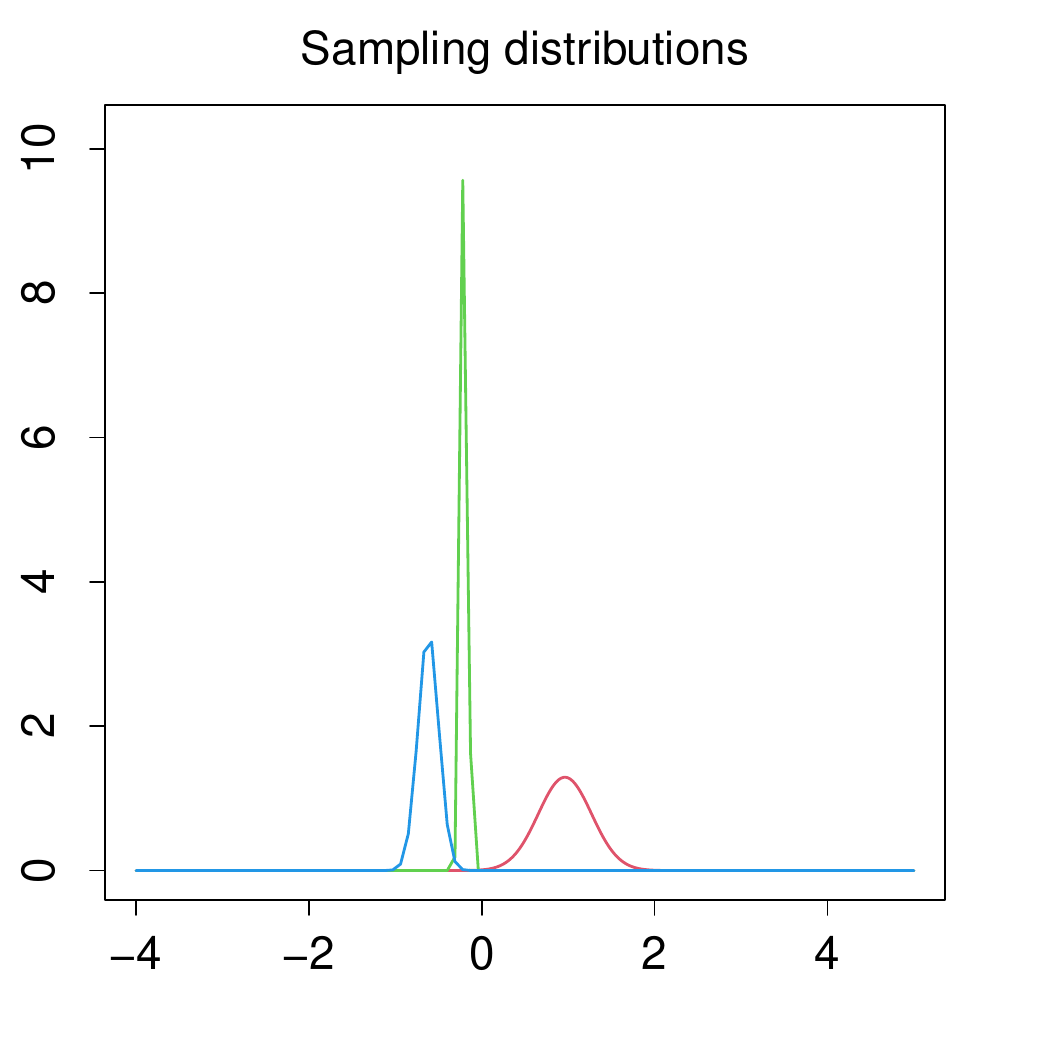,width=2.8in,angle=0}
		\end{tabular}{\caption{Population distributions of the three scaled studies and the sampling distributions of the observed effect sizes with blue for Zheng (2008), green for Zheng (2019), and red for Jackson (2021). For each study, the population distribution is assumed to be normal with mean SMD and variance 1. The sampling distribution of the effect size is assumed to be normal with mean SMD and the variance is assumed to be the within-study variance.}\label{figsmd}}
	\end{center}
\end{figure}

\subsection{Numerical results}\label{sec6.2}
To compare the $I^2_{\rm A}$, $I^2_{\rm ANOVA}$ and $I^2$ statistics for SMD, we generate the data from the following two-arm studies:
\beqr\label{ipdsmd}
\begin{aligned}	y_{ij}^T&=\sigma_i(\mu^T+\delta_i^T+\xi_{ij}^T),\quad j=1,\ldots,n_i^T,\\
	y_{ij'}^C&=\sigma_i(\mu^C+\delta_i^C+\xi_{ij'}^C),\quad j'=1,\ldots,n_i^C,
\end{aligned}
\eeqr
where $\xi_{ij}^T$ and $\xi_{ij'}^C$ are i.i.d. normal random errors with mean 0 and variance 1. Compared with model (\ref{ipdmd}), this new model contains an additional parameter $\sigma_i$, which is used to rescale each study. For a more detailed description of model (\ref{ipdsmd}), one may refer to \hyperref[appE]{Appendix E}.

In this simulation, we let $\sigma_i$ follow a uniform distribution ${\rm Unif}(0.5, 1.5)$, which yields unequal population variances for the $k$ studies and thus SMD ought to be applied rather than MD. The other settings are kept the same as those in Section \ref{sec6.2}. Then for each simulation setting, we proceed to generate the raw data and compute the summary statistics, including $y_i^T$, $y_i^C$, $\hat\sigma_{y_i^T}^2$ and $\hat\sigma_{y_i^C}^2$, for each of the $k$ studies. Finally with $M=10,000$ repetitions, we present the boxplots and the mean values of the $I^2_{\rm A}$, $I^2_{\rm ANOVA}$ and $I^2$ statistics in Figure \ref{c3}.
\begin{figure}[htp!]
	\begin{center}
		\begin{tabular}{cc}
			\psfig{figure=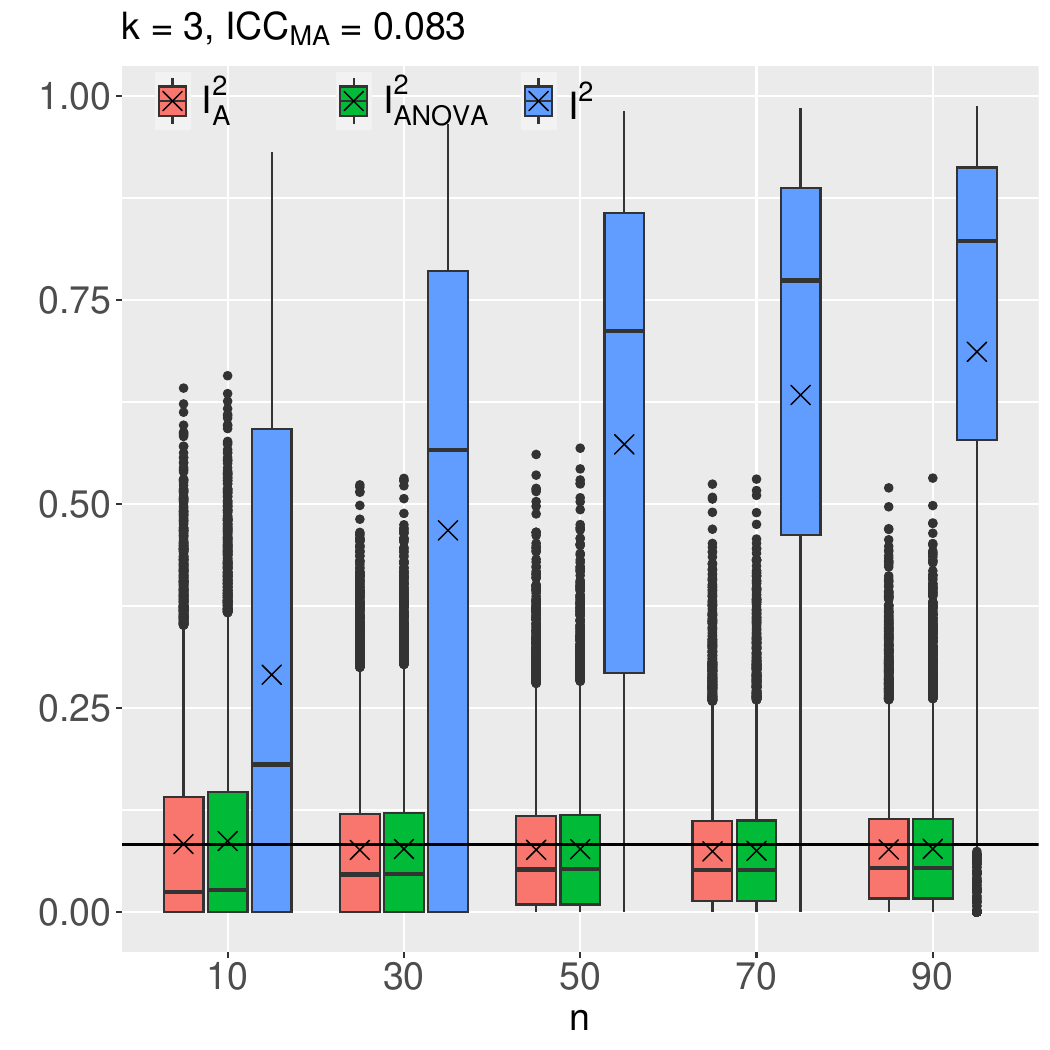,width=3.3in,angle=0}&
			\psfig{figure=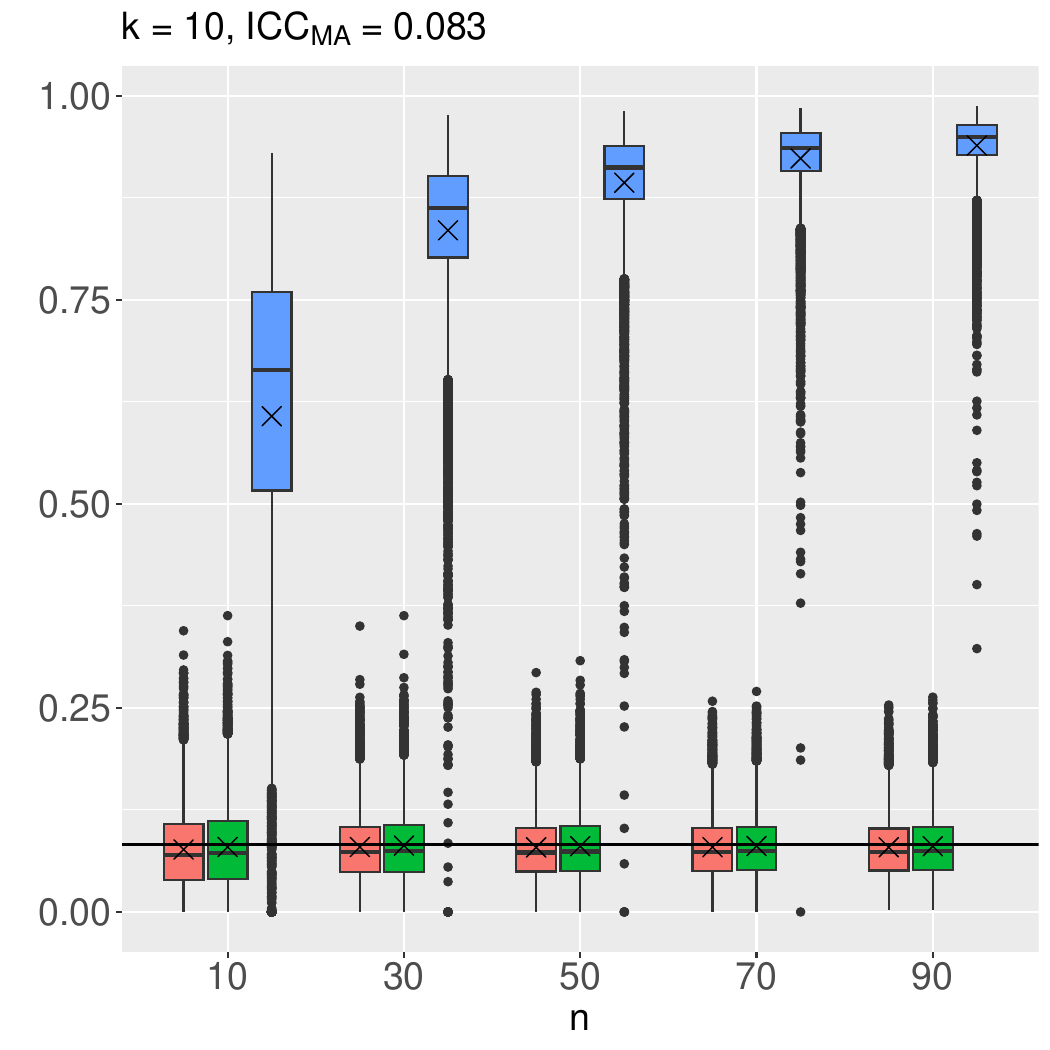,width=3.3in,angle=0}\\
			\psfig{figure=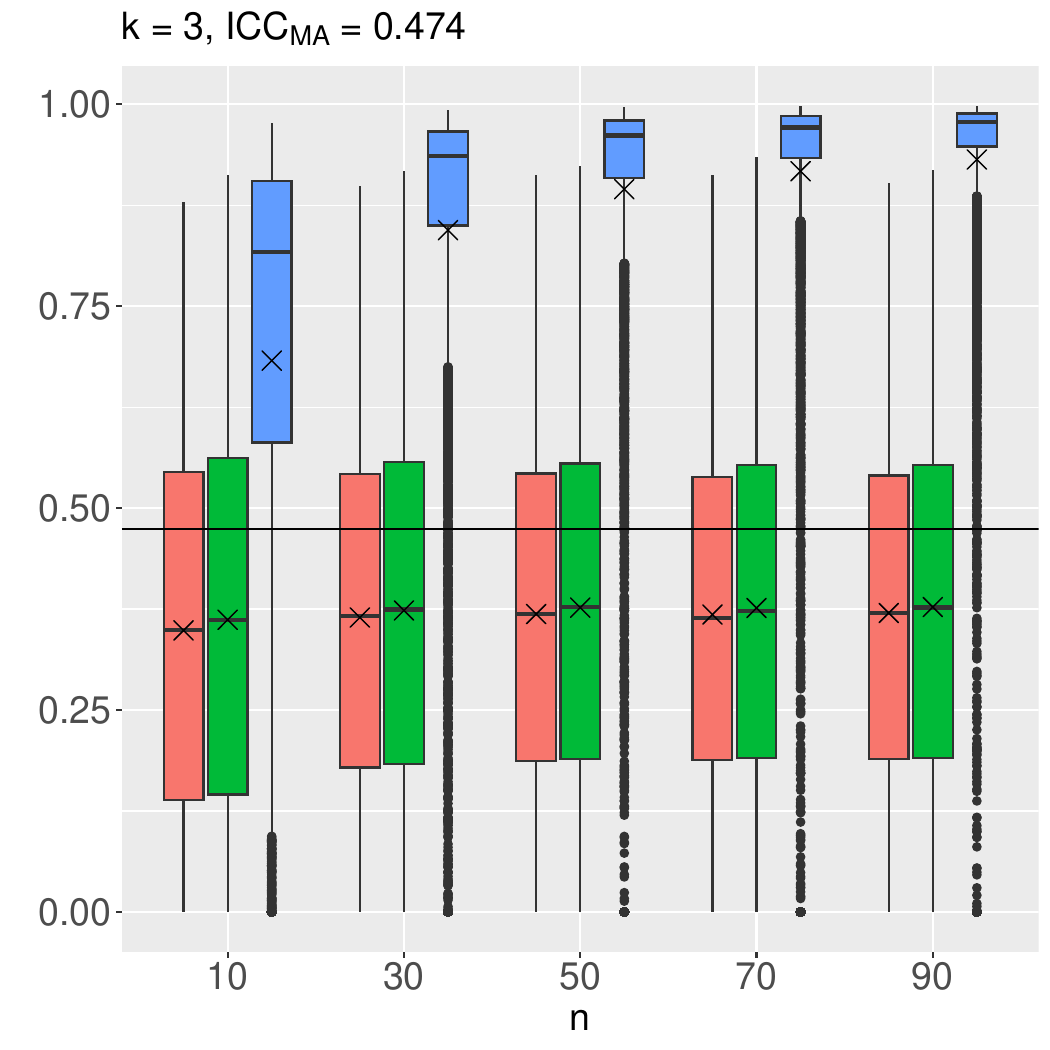,width=3.3in,angle=0}&
			\psfig{figure=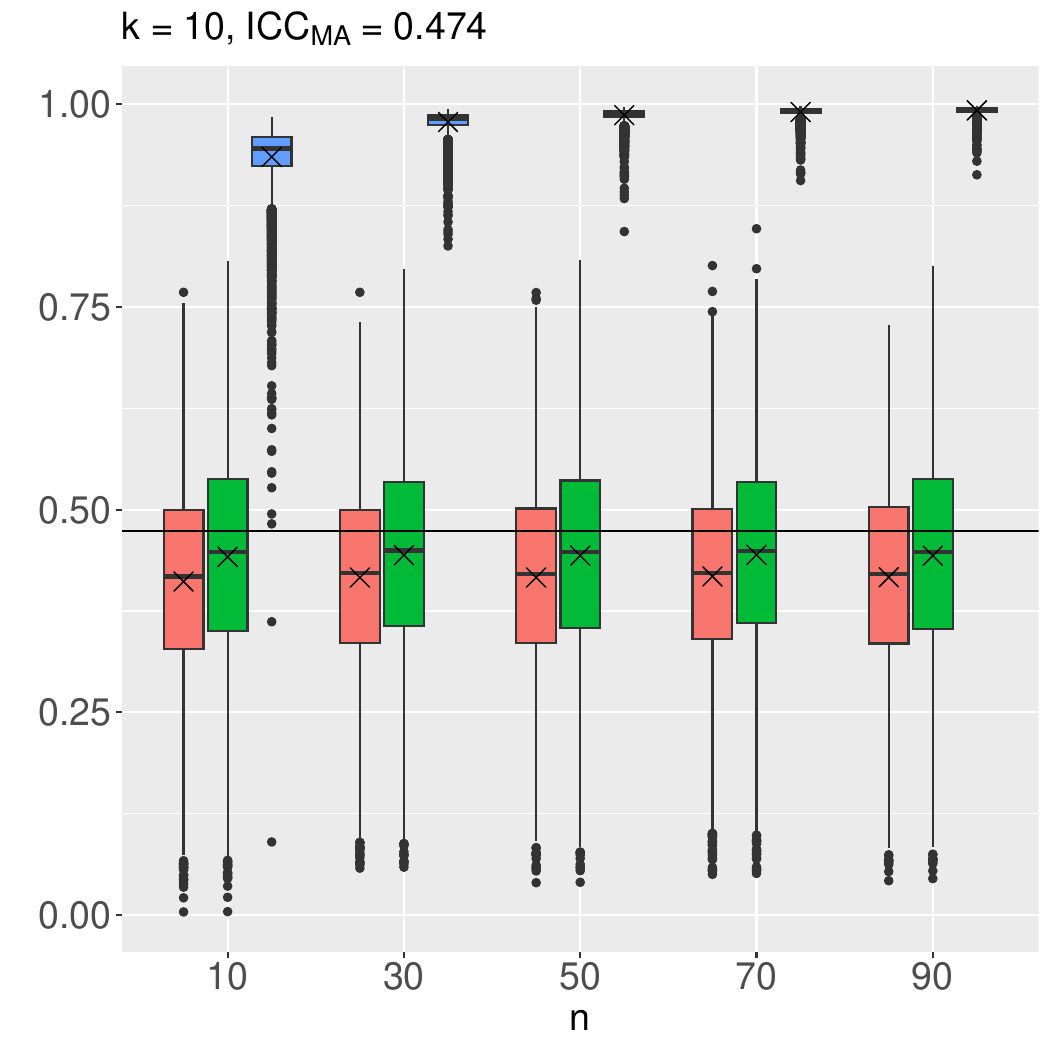,width=3.3in,angle=0}
		\end{tabular}{\caption{Boxplots of the three statistics for the standardized mean difference with 10,000 repetitions. The red boxes represent the $I^2_{\rm A}$ statistic, the green boxes represent the $I^2_{\rm ANOVA}$ statistic, and the blue boxes represent the $I^2$ statistic. The crosses on each box are the mean values of the 10000 repetitions. The solid lines stand for the absolute heterogeneity ${\rm ICC}_{\rm MA}$.}\label{c3}}
	\end{center}
\end{figure}

From Figure \ref{c3}, it is evident that the $I^2$ statistic is always monotonically increasing with the sample size $n$, which is consistent with the simulation results in Sections \ref{sec4.3} and \ref{sec5.2}. By contrast, the $I^2_{\rm A}$ and $I^2_{\rm ANOVA}$ statistics can always provide a good measure for the quantify of heterogeneity between the study populations, no matter whether the study sample sizes are large or not. For SMD, $I^2_{\rm ANOVA}$ provides a more accurate estimate for ${\rm ICC}_{\rm MA}$ with large heterogeneity compared with $I^2_{\rm A}$.

\section{Conclusion and discussion}\label{sec7}

Quantifying the heterogeneity is an important issue in meta-analysis for decision making. The presence of heterogeneity affects the extent to which generalizable conclusions can be formed and determines whether the random-effects model or the fixed-effect model should be employed. The $Q$ statistic is commonly used to test for the existence of the heterogeneity. However, as mentioned in the Cochrane Handbook for Systematic Reviews of Interventions \cite{higgins2019cochrane}, this test may have low power when the number of studies is small. Some also argue that the heterogeneity always exists, whether detectable by statistical tests or not. Thus, as a way to remedy, the $I^2$ statistic was further introduced to measure the extent of heterogeneity as
\beqrs
I^2
=\max\left\{\frac{Q-(k-1)}{Q},0\right\}.
\eeqrs
Nowadays, both the $Q$ statistic and the $I^2$ statistic are routinely reported in the forest plot in meta-analysis, and the choice between the random-effects model and the fixed-effect model often relies on these two statistics. More specifically, if the $p$-value of the $Q$ statistic is less than 0.1 and the $I^2$ statistic exceeds 0.5, the random-effects model is preferred for meta-analysis; otherwise, the fixed-effect model will be chosen \cite{jiang2021clinical,chinnaratha2016percutaneous,yang2012effect}. It is noted, however, that these two statistics are highly correlated since the $I^2$ statistic is a monotonically increasing function of the $Q$ statistic. Additionally, the $p$-value based on the $Q$ statistic only indicates whether there is a statistical significance\cite{gelman2006difference}, but not reflect regarding the biological difference between the studies. Even if heterogeneity is not statistically detected, it may still be clinically meaningful. Therefore, a random-effects model is often more appropriate, and the inclusion of prediction intervals is recommended in the practice of meta-analysis.

In this paper, we have introduced a new measure, denoted as ${\rm ICC}_{\rm MA}$, to quantify the between-study heterogeneity for meta-analysis. To explore the distinction between ${\rm ICC}_{\rm HT}$ and ${\rm ICC}_{\rm MA}$, we have also drawn an interesting connection between ANOVA and meta-analysis, and learned that the essence of ${\rm ICC}_{\rm HT}$ is to quantify the heterogeneity between the observed effect sizes. As demonstrated by the motivating example in Section \ref{sec2}, the sampling distributions of the observed effect sizes may exhibit a significant dependency on the sample sizes, and they will asymptotically converge to their true effect sizes. Accordingly, with large sample sizes, the observed effect sizes will also yield an increased ${\rm ICC}_{\rm HT}$ close to one, no matter whether the underlying heterogeneity between the study populations is truly large or not. As an important alternative, our newly defined ${\rm ICC}_{\rm MA}$ is proposed to directly quantify the heterogeneity between the study populations. More specifically, we have systematically studied the statistical properties of ${\rm ICC}_{\rm MA}$, including the monotonicity, the location and scale invariance, the study size invariance, and the sample size invariance. It is the sample size invariance that distinguishes our new absolute measure of heterogeneity from ${\rm ICC}_{\rm HT}$. 

Moreover, we have also proposed two new statistics to serve as the estimates of ${\rm ICC}_{\rm MA}$, where the $I^2_{\rm ANOVA}$ statistic in (\ref{iqb}) is derived based on ANOVA, and the $I^2_{\rm A}$ statistic in (\ref{iq1}) is derived based on the $Q$ statistic as
\beqrs
I_{\rm A}^2=\max\left\{\frac{Q-(k-1)}{Q+(k-1)(\tilde n-1)},0\right\},
\eeqrs	
where $\tilde n$ is the adjusted mean sample size for the $k$ studies. In addition, the $I_{\rm A}^2$ statistic can also be expressed as
$I_{\rm A}^2={\hat\tau^2} / (\hat\tau^2+\tilde n\tilde\sigma_y^2)$.
When the assumption of a common population variance holds, $\tilde n\tilde\sigma_y^2$ equals the common population variance $\sigma_{\rm pop}^2$; otherwise, it serves as a representative value for $\sigma_{\rm pop}^2$. This demonstrates that our $I_{\rm A}^2$ statistic is also widely applicable to scenarios where the population variances differ.
For practical use, the exact formulas for the $I^2_{\rm A}$ and $I^2_{\rm ANOVA}$ statistics are also derived under two other common scenarios with the mean difference or the standardized mean difference as the effect size. Simulations and real data analyses demonstrate that they both provide asymptotically unbiased estimators of the absolute heterogeneity between the study populations, and as expected, they also do not depend on the study sample sizes. For most cases, $I^2_{\rm A}$ and $I^2_{\rm ANOVA}$ show similar performance in estimating ${\rm ICC}_{\rm MA}$. However, for meta-analysis of the standardized mean difference with large heterogeneity, $I^2_{\rm ANOVA}$ offers a slightly better estimate of ${\rm ICC}_{\rm MA}$ than $I^2_{\rm A}$. Given that the $Q$ statistic is commonly reported in meta-analysis and that $I^2_{\rm A}$ can be conveniently calculated from the $Q$ statistic, we recommend using $I^2_{\rm A}$ in practical applications. But, of course, in case a higher accuracy is desired, the more complex $I^2_{\rm ANOVA}$ should be used, particularly for meta-analysis of the standardized mean difference.
To conclude, the $I^2_{\rm A}$ statistic can serve as a supplemental measure to monitor the situations where the study effect sizes are indeed similar with little biological difference. In such scenario, the fixed-effect model can be appropriate. Whereas if the sample sizes are very large, we note that the $I^2$ statistic may still rapidly increase to 1 showing a large heterogeneity and subsequently a random-effects model will continue to be adopted. In view of this, we are thus confident that the $I^2_{\rm A}$ statistic can add new value to meta-analysis, for example, being included in the forest plot as a supplement to the $I^2$ statistic.

In addition, as seen in Figures \ref{c1}, \ref{c2} and \ref{c3}, the $I^2_{\rm A}$ statistic tends to slightly underestimate ${\rm ICC}_{\rm MA}$ when $k$ is small and ${\rm ICC}_{\rm MA}$ is large. This underestimation may be primarily due to the inaccurate estimation of $\tau^2$. While the $I^2_{\rm A}$ statistic has the advantage of being directly expressed using the $Q$ statistic, making it more convenient to use, it implicitly relies on the DerSimonian-Laird (DL) method for estimating $\tau^2$. Although the DL estimator is most commonly used, it does have limitations, and numerous alternative methods have been proposed to further improve it, as summarized in Veroniki et al. (2016).\cite{veroniki2016methods} More recently,
	Kulinskaya et al. (2021)\cite{kulinskaya2021aq} and Bakbergenuly et al. (2022) \cite{bakbergenuly2022q} further highlighted that the $Q$ statistic may perform poorly in estimating $\tau^2$, partly because it does not account for the uncertainty in the within-study variances. To conclude, future research is warranted to investigate the impact of the $\tau^2$ estimate across various effect sizes on the estimation accuracy of ${\rm ICC}_{\rm MA}$, offering a broader range of options for estimating the measure of heterogeneity.

Lastly, it is worth noting that there are also several interesting directions for future research. First, the current work has presented its primary focus on meta-analysis with continuous outcomes. As a parallel work, it can be equally important for the $I^2_{\rm A}$ statistic to be further extended to meta-analysis with binary outcomes, which are also commonly encountered in clinical studies. 
However, extending $I^2_{\rm A}$ to meta-analysis with binary outcomes may not be straightforward. Unlike the continuous outcomes, the binary outcomes do not have a direct definition of the population variance for each study, making the task more complex. To illustrate this challenge, we now consider a study with two treatment arms and use the log odds ratio (lnOR) as the effect size. For the treatment group, let $a_i$ represent the number of events, $p_i^T$ the event rate, and $n_i^T$ the total number of participants. In this case, the event number $a_i$ follows the binomial distribution ${\rm Bin}(n_i^T, p_i^T)$. Similarly, let $b_i$, $p_i^C$ and $n_i^C$ be the number of events, the event rate, and the total number of participants in the control group, and moreover $b_i \sim {\rm Bin}(n_i^C, p_i^C)$. Then the observed effect size (lnOR) for the study will be calculated as $y_i=\ln[\{a_i/(n_i^T-a_i)\}/\{b_i/(n_i^C-b_i)\}]$, with the within-study variance estimated by $\sigma_{y_i}^2=1/a_i+1/(n_i^T-a_i)+1/b_i+1/(n_i^C-b_i)$. For more details, one may refer to Higgins et al. (2019).\cite{higgins2019cochrane} Given this setup, it may not be straightforward to define a common study population variance $\sigma_{\rm pop}^2$ based on the within-study variance $\sigma_{y_i}^2$ and the sample sizes $n_i^T$ and $n_i^C$ and so requires further investigation.

\vspace{1cm}
\bmsection*{Acknowledgments}
The authors sincerely thank the Editor, the Associate Editor, and the two reviewers for their constructive comments that have led to a substantial improvement of this paper. 
Liping Zhu and Ke Yang's research was supported in part by the National Key R\&D Program of China (2022YFA1003702) and the National Natural Science Foundation of China (12371294). 
Wangli Xu's research was supported in part by the MOE Project of Key Research Institute of Humanities and Social Sciences (No. 22JJD910001).
Tiejun Tong's research was supported in part by the General Research Fund of Hong Kong (HKBU12303421 and HKBU12300123), the Initiation Grant for Faculty Niche Research Areas (RC-FNRA-IG/23-24/SCI/03) of Hong Kong Baptist University, and the National Natural Science Foundation of China (12071305).


\begin{thebibliography}{10}
	\providecommand \doibase [0]{http://dx.doi.org/}%
	
	\bibitem{egger1997meta}
	Egger M, Smith GD. {Meta-analysis: potentials and promise}. {\it British
		Medical Journal.} 1997\string;315(7119)\string:1371--1374.
	
	\bibitem{higgins2019cochrane}
	Higgins JP, Thomas J, Chandler J, et al. {\it Cochrane Handbook for Systematic
		Reviews of Interventions, 2nd Edition}.
	\newblock Chichester: John Wiley \& Sons, 2019.
	
	\bibitem{cochran1954combination}
	Cochran WG. The combination of estimates from different experiments. {\it
		Biometrics.} 1954\string;10(1)\string:101--129.
	
	\bibitem{dersimonian1986meta}
	DerSimonian R, Laird N. Meta-analysis in clinical trials. {\it Controlled
		Clinical Trials.} 1986\string;7(3)\string:177--188.
	
	\bibitem{higgins2002quantifying}
	Higgins JP, Thompson SG. Quantifying heterogeneity in a meta-analysis. {\it
		Statistics in Medicine.} 2002\string;21(11)\string:1539--1558.
	
	\bibitem{higgins2003measuring}
	Higgins JP, Thompson SG, Deeks JJ, Altman DG. Measuring inconsistency in
	meta-analyses. {\it British Medical Journal.}
	2003\string;327(7414)\string:557--560.
	
	\bibitem{bohning2017some}
	B{\"o}hning D, Lerdsuwansri R, Holling H. {Some general points on the
		$I^2$-measure of heterogeneity in meta-analysis}. {\it Metrika.}
	2017\string;80(6)\string:685--695.
	
	\bibitem{rucker2008undue}
	R{\"u}cker G, Schwarzer G, Carpenter JR, Schumacher M. {Undue reliance on $I^2$
		in assessing heterogeneity may mislead}. {\it BMC Medical Research
		Methodology.} 2008\string;8(1)\string:79.
	
	\bibitem{riley2016external}
	Riley RD, Ensor J, Snell KI, et al. {External validation of clinical prediction
		models using big datasets from e-health records or IPD meta-analysis:
		opportunities and challenges}. {\it British Medical Journal.}
	2016\string;353(8063)\string:i3140.
	
	\bibitem{inthout2016plea}
	IntHout J, Ioannidis JP, Rovers MM, Goeman JJ. Plea for routinely presenting
	prediction intervals in meta-analysis. {\it BMJ Open.}
	2016\string;6(7)\string:e010247.
	
	\bibitem{borenstein2017basics}
	Borenstein M, Higgins JP, Hedges LV, Rothstein HR. {Basics of meta-analysis:
		$I^2$ is not an absolute measure of heterogeneity}. {\it Research Synthesis
		Methods.} 2017\string;8(1)\string:5--18.
	
	\bibitem{sangnawakij2019meta}
	Sangnawakij P, B{\"o}hning D, Niwitpong SA, Adams S, Stanton M, Holling H.
	{Meta-analysis without study-specific variance information: heterogeneity
		case}. {\it Statistical Methods in Medical Research.}
	2019\string;28(1)\string:196--210.
	
	\bibitem{holling2020evaluation}
	Holling H, B{\"o}hning W, Masoudi E, B{\"o}hning D, Sangnawakij P. {Evaluation
		of a new version of $I^2$ with emphasis on diagnostic problems}. {\it
		Communications in Statistics-Simulation and Computation.}
	2020\string;49(4)\string:942--972.
	
	\bibitem{efsa2011statistical}
	Committee ES. Statistical significance and biological relevance. {\it EFSA
		Journal.} 2011\string;9(9)\string:2372.
	
	\bibitem{fisher1925}
	Fisher RA. {\it Statistical Methods for Research Workers}.
	\newblock Edinburgh: Oliver \& Boyd, 1925.
	
	\bibitem{smith1957estimation}
	Smith CAB. On the estimation of intraclass correlation. {\it Annals of Human
		Genetics.} 1957\string;21(4)\string:363--373.
	
	\bibitem{donner1979use}
	Donner A. The use of correlation and regression in the analysis of family
	resemblance. {\it American Journal of Epidemiology.}
	1979\string;110(3)\string:335--342.
	
	\bibitem{mcgraw1996forming}
	McGraw KO, Wong SP. Forming inferences about some intraclass correlation
	coefficients. {\it Psychological Methods.} 1996\string;1(1)\string:30--46.
	
	\bibitem{cochran1939use}
	Cochran WG. The use of the analysis of variance in enumeration by sampling.
	{\it Journal of the American Statistical Association.}
	1939\string;34(207)\string:492--510.
	
	\bibitem{thomas1978interval}
	Thomas JD, Hultquist RA. Interval estimation for the unbalanced case of the
	one-way random effects model. {\it The Annals of Statistics.}
	1978\string;6(3)\string:582--587.
	
	\bibitem{jeong2014efficacy}
	Jeong H, Yim HW, Cho YS, et al. Efficacy and safety of stem cell therapies for
	patients with stroke: a systematic review and single arm meta-analysis. {\it
		International Journal of Stem Cells.} 2014\string;7(2)\string:63--69.
	
	\bibitem{avery2022efficacy}
	Avery N, McNeilage AG, Stanaway F, et al. Efficacy of interventions to reduce
	long term opioid treatment for chronic non-cancer pain: systematic review and
	meta-analysis. {\it British Medical Journal.} 2022\string;377\string:e066375.
	
	\bibitem{cohen2013statistical}
	Cohen J. {\it Statistical Power Analysis for the Behavioral Sciences, 2nd
		Edition}.
	\newblock New York: Routledge, 2013.
	
	\bibitem{hedges1981distribution}
	Hedges LV. Distribution theory for Glass's estimator of effect size and related
	estimators. {\it Journal of Educational Statistics.}
	1981\string;6(2)\string:107--128.
	
	\bibitem{lin2021evaluation}
	Lin L, Aloe AM. Evaluation of various estimators for standardized mean
	difference in meta-analysis. {\it Statistics in Medicine.}
	2021\string;40(2)\string:403--426.
	
	\bibitem{jiang2021clinical}
	Jiang SJ, Huang CH. The Clinical Efficacy of N-Acetylcysteine in the Treatment
	of ST Segment Elevation Myocardial Infarction A Meta-Analysis and Systematic
	Review. {\it International Heart Journal.} 2021\string;62(1)\string:142--147.
	
	\bibitem{chinnaratha2016percutaneous}
	Chinnaratha MA, Chuang MyA, Fraser RJ, Woodman RJ, Wigg AJ. Percutaneous
	thermal ablation for primary hepatocellular carcinoma: a systematic review
	and meta-analysis. {\it Journal of Gastroenterology and Hepatology.}
	2016\string;31(2)\string:294--301.
	
	\bibitem{yang2012effect}
	Yang J, Wang HP, Zhou L, Xu CF. Effect of dietary fiber on constipation: a meta
	analysis. {\it World Journal of Gastroenterology.}
	2012\string;18(48)\string:7378.
	
	\bibitem{gelman2006difference}
	Gelman A, Stern H. The difference between ``significant" and ``not significant"
	is not itself statistically significant. {\it The American Statistician.}
	2006\string;60(4)\string:328--331.
	
	\bibitem{veroniki2016methods}
	Veroniki AA, Jackson D, Viechtbauer W, et al. Methods to estimate the
	between-study variance and its uncertainty in meta-analysis. {\it Research
		Synthesis Methods.} 2016\string;7(1)\string:55--79.
	
    \bibitem{kulinskaya2021aq}
	Kulinskaya E, Hoaglin DC, Bakbergenuly I, Newman J. A $Q$ statistic with
	constant weights for assessing heterogeneity in meta-analysis. {\it Research
		Synthesis Methods.} 2021\string;12(6)\string:711--730.
	
	\bibitem{bakbergenuly2022q}
	Bakbergenuly I, Hoaglin DC, Kulinskaya E. {On the $Q$ statistic with constant
		weights for standardized mean difference}. {\it British Journal of
		Mathematical and Statistical Psychology.} 2022\string;75(3)\string:444--465.
	
	\bibitem{searle1971linear}
	Searle SR. {\it Linear Models}.
	\newblock New York: Wiley, 1971.
	
	\bibitem{wald1940note}
	Wald A. A note on the analysis of variance with unequal class frequencies. {\it
		The Annals of Mathematical Statistics.} 1940\string;11(1)\string:96--100.
	
	\bibitem{karlin1981sibling}
	Karlin S, Cameron EC, Williams PT. Sibling and parent--offspring correlation
	estimation with variable family size. {\it Proceedings of the National
		Academy of Sciences of the United States of America.}
	1981\string;78(5)\string:2664--2668.
	
	\bibitem{donner1980estimation}
	Donner A, Koval JJ. The estimation of intraclass correlation in the analysis of
	family data. {\it Biometrics.} 1980\string;36(1)\string:19--25.
	
	\bibitem{donner1980large}
	Donner A, Koval JJ. The large sample variance of an intraclass correlation.
	{\it Biometrika.} 1980\string;67(3)\string:719--722.
	
	\bibitem{donner1986review}
	Donner A. A review of inference procedures for the intraclass correlation
	coefficient in the one-way random effects model. {\it International
		Statistical Review.} 1986\string;54(1)\string:67--82.
	
	\bibitem{sahai2004analysis}
	Sahai H, Ojeda MM. {\it Analysis of Variance for Random Models: Theory,
		Methods, Applications, and Data Analysis. Volume 2: Unbalanced Data.}
	\newblock Boston: Birkh{\"a}user, 2004.
	
\end{thebibliography}

\newpage
\begin{table*}
	\begin{center}{\caption{Connection between the ANOVA model in (\ref{IPD}) and the meta-analysis model in (\ref{REM}), where $y_i=\sum_{j=1}^{n_i}y_{ij}/n_i$ and $\epsilon_i=\sum_{j=1}^{n_i}\xi_{ij}/n_i$ for $i=1,\dots,k$ and $j=1,\dots,n_i$.
			}\label{table1}}
		\begin{tabular*}{300pt}{@{\extracolsep\fill}l|cc@{\extracolsep\fill}}
			\hline
			& \mbox{ANOVA} & \mbox{Meta-analysis}\\
			\hline
			\mbox{Model} & $y_{ij}=\mu+\delta_i+\xi_{ij}$ & $y_i=\mu+\delta_i+\epsilon_i$\\
			\mbox{Between-study variance}& $\tau^2$&$\tau^2$\\
			\mbox{Error (or within-study) variance}&$\sigma^2$& $\sigma^2/n_i$\\
			\mbox{Total variance}& ${\rm var}(y_{ij})=\tau^2+\sigma^2$&${\rm var}(y_{i})=\tau^2+\sigma^2/n_i$\\
			\hline
		\end{tabular*}
	\end{center}
\end{table*}

\begin{table}
	\begin{center}
		{\caption{The summary data of the ten studies for the meta-analysis from Jeong et al. (2014), where $y_i$ are the observed effect sizes and $n_i$ are the study sample sizes.}\label{data}}
		\begin{tabular*}{250pt}{@{\extracolsep\fill}lrrr@{\extracolsep\fill}}
			\hline
			Study&$y_i$ & $n_i$ & $\hat\sigma_{y_i}^2$\\
			\hline
			Wang (2013)& -3.10& 8& 1.81\\
			Prasad (2012)& -6.30& 11& 3.16\\
			Moniche (2012)& -9.40& 10& 0.53\\
			Friedrich (2012)& -14.20& 20& 3.04\\
			Honmou (2011)& -7.00& 12& 1.40\\
			Savitz (2011)& -9.00& 10& 1.60\\
			Battistella (2011)& -3.40& 6& 2.41\\
			Suarez (2009)& -2.20& 5& 1.15\\
			Savitz (2005)& -1.40& 5& 0.97\\
			Bang (2005)& -2.00& 5& 1.06\\
			\hline
		\end{tabular*}
	\end{center}
\end{table}

\begin{table}
	\begin{center}
		{\caption{The summary data of the three studies for the meta-analysis from Avery et al. (2022).\hspace{10cm}}\label{data2}}
		\begin{tabular*}{350pt}{@{\extracolsep\fill}lrrrrrr@{\extracolsep\fill}}
			\hline
			Study&$y_i^T$ & $n_i^T$ & $\hat\sigma_{y_i^T}$&$y_i^C$ & $n_i^C$ & $\hat\sigma_{y_i^C}$\\
			\hline
			Jackson (2021)& -34& 9& 10.43& -66& 6& 12.78\\
			Zheng (2019)& -13.6& 48& 3.23& -8.8& 60& 3.14\\
			Zheng (2008)& -25.7& 17& 7.59& -10.9& 18& 2.80\\
			\hline
		\end{tabular*}
	\end{center}
\end{table}

\clearpage
\appendix
\renewcommand{\thesection}{\Alph{section}.}
\bmsection{Proof of the properties of ICC$_{\rm MA}$}\label{appA}
\vspace*{12pt}

\begin{proof}[Proof of \textit{``Monotonicity"}]
	
	By the definition in (\ref{icc}), we can rewrite ICC$_{\rm MA}$ as 
	\beqrs
	{\rm ICC}_{\rm MA}=\frac{1}{1+\sigma_{\rm pop}^2/\tau^2}.
	\eeqrs
	This shows that ICC$_{\rm MA}$ is a monotonically increasing function of $\tau^2/\sigma_{\rm pop}^2$ and so property (i$'$) holds. 
\end{proof}

\begin{proof}[Proof of \textit{``Location and scale invariance"}]
	
	To prove the location and scale invariance, for any constants $a$ and $b>0$, we assume that the newly observed effect sizes are $y'_{ij}=a+by_{ij}$ for $i=1,\ldots,k$ and $j=1,\ldots,n_i$. Let also $\mu'_i=a+b\mu_i$ be the true effect sizes of the new study populations. Then consequently, the between-study variance and the common population variance are given as
	\beqrs
	(\tau^2)'&=&{\rm var}(\mu'_i)={\rm var}(a+b\mu_i)=b^2\tau^2,\\
	(\sigma_{\rm pop}^2)'&=&{\rm var}(a+by_{ij}|a+b\mu_i)=b^2\sigma_{\rm pop}^2.
	\eeqrs
	Further by (\ref{icc}), the measure of heterogeneity between the new studies is
	\beqrs
	{\rm ICC}'_{\rm MA}=\frac{(\tau^2)'}{(\tau^2)'+(\sigma_{\rm pop}^2)'}=\frac{b^2\tau^2}{b^2\tau^2+b^2\sigma_{\rm pop}^2}=\frac{\tau^2}{\tau^2+\sigma_{\rm pop}^2}={\rm ICC}_{\rm MA}.
	\eeqrs
	This verifies the property of location and scale invariance.
\end{proof}
\begin{proof}[Proof of \textit{``Study size invariance"}]
	To prove the study size invariance, we assume there are a total of $k'$ studies. Then by the random-effects model in (\ref{meta}), since the individual means $\mu_i$ are i.i.d. from $N(\mu,\tau^2)$, the between-study variance will remain unchanged as $\tau^2$ regardless of the number of studies. Further by the common population variance assumption, we have ${\rm var}(y_{ij}|\mu_i)=\sigma_{\rm pop}^2$ for all $i=1,\dots,k'$ and $j=1,\dots,n_i$. This proves the property of study size invariance.
\end{proof}
\begin{proof}[Proof of \textit{``Sample size invariance"}]
	To prove the sample size invariance, we assume that the new sample sizes are $n_i'$ for each study, and consequently $y_i'=\sum_{j=1}^{n_i'} y_{ij}/n_i'$ are the new effect sizes. Then under the common population variance assumption that ${\rm var}(y_{ij}|\mu_i)=\sigma_{\rm pop}^2$ for all $i$ and $j$, we have $\sigma_{y_i'}^2 = {\rm var}(y_i'|\mu_i) = \sigma_{\rm pop}^2/n_i'$, or equivalently, $n_i'\sigma_{y_i'}^2=\sigma_{\rm pop}^2$.
	That is, no matter how the sample sizes vary, the common population variance will always remain unchanged. Finally, noting that $\tau^2$ also remains since the study populations are  unaltered, we thus have the property of sample size invariance.
\end{proof}

\bmsection{Methods for estimating ICC}\label{appB}

To estimate ICC from the random-effects ANOVA in (\ref{IPD}), we first partition the total variation of the observations into two components as
\beqr\label{ident}
\sum_{i=1}^k\sum_{j=1}^{n_i}(y_{ij}-\bar y)^2=\sum_{i=1}^kn_i(y_i-\bar y)^2+\sum_{i=1}^k\sum_{j=1}^{n_i}(y_{ij}-y_i)^2,
\eeqr
where $y_i=\sum_{j=1}^{n_i}y_{ij}/n_i$ are the individual sample means, and $\bar y=\sum_{i=1}^k\sum_{j=1}^{n_i}y_{ij}/\sum_{i=1}^kn_i$ is the grand sample mean. More specifically, the term on the left-hand side of (\ref{ident}) is the total sum of squares (SST), and the two terms on the right-hand side are the sum of squares between the populations (SSB) and the error sum of squares within the populations (SSW), respectively.

By equating SSB and SSW to their respective expected values,  Cochran (1939) \cite{cochran1939use} derived the method of moments estimators of $\tau^2$ and $\sigma^2$. Further by plugging these two estimators in formula (\ref{rho}), it yields the ANOVA estimator for the unknown ICC. By Smith (1957) \cite{smith1957estimation}, the ANOVA estimator is a biased but consistent estimator. Moreover, as the method of moments estimators may take a negative value when ${\rm SSB}/k<{\rm SSW}/(\sum_{i=1}^k(n_i-1)$, one often truncates the negative value to 0 when it occurs. For the balanced case when the sample sizes are all equal, Searle (1971) \cite{searle1971linear} derived an exact confidence interval for ICC based on the ANOVA table. For the unbalanced case, however, the exact confidence interval from the ANOVA table is not available. As a remedy, Thomas and Hultquist (1978) \cite{thomas1978interval} and Donner (1979) \cite{donner1979use} suggested an adjusted confidence interval in which the common sample size in the balanced case is replaced by the average sample size. They further showed by simulation studies that the adjusted confidence interval performs very well in terms of the coverage probability.

Besides the well-known ANOVA estimator, it is noteworthy that there are also other estimators for ICC in the literature. To name a few, Thomas and Hultquist (1978) \cite{thomas1978interval} constructed a confidence interval for ICC based on the unweighted average of the individual sample means $\tilde y=\sum_{i=1}^ky_i/k$. Observing that ${\rm ICC}=(\tau^2/\sigma^2)/(\tau^2/\sigma^2+1)$, Wald (1940) \cite{wald1940note} proposed another estimator for ICC by first estimating $\tau^2/\sigma^2$, yet as a limitation, there does not exist a closed form for either the point estimator or its confidence interval. As another alternative, by the facts that ${\rm cov}(y_{ij},y_{il})=\tau^2$ for $j\ne l$ and ${\rm var}(y_{ij})=\tau^2+\sigma^2$, Karlin
et al. (1981) \cite{karlin1981sibling} proposed to estimate ICC by the Pearson product-moment correlation computed over all the possible pairs of $(y_{ij},y_{il})$ for $j\ne l$ with some weighting schemes. In addition, Donner and Koval (1980a,b) \cite{donner1980estimation,donner1980large} proposed an iterative algorithm to compute the maximum likelihood estimator (MLE) for ICC directly, and presented its performance by simulations when the number of studies is large. For more estimators of ICC, one may also refer to Donner (1986) \cite{donner1986review}, Sahai and Ojeda (2004) \cite{sahai2004analysis}, and the references therein.

Despite the rich literature on the estimation of ICC, none of the existing estimators is known to be uniformly better than the others in the unbalanced case \cite{sahai2004analysis}. In practice, thanks to its simple and elegant form, the ANOVA estimator is frequently treated as the optimal estimator and so is most commonly used for estimating ICC. Lastly, we also note that the ANOVA estimator and the confidence interval suggested by Thomas and Hultquist (1978) \cite{thomas1978interval} and Donner (1979) \cite{donner1979use} can be readily implemented by the function \textit{ICCest} in the R package `ICC'.

\bmsection{The derivation of the $I^2_{\rm ANOVA}$ statistic}\label{appC}

To estimate ${\rm ICC}_{\rm MA}$, we begin by presenting the following lemma along with its proof.
\begin{lemma}\label{lemm1}
	With model (\ref{IPD}) and the summary data $y_i$, $\hat{\sigma}^2_{y_i}$ for $i=1,\dots,k$ in meta-analysis, we have $E({\rm MSB}_{\rm MA})=\tilde n\tau^2+\sigma_{\rm pop}^2$ and $E({\rm MSW}_{\rm MA})=\sigma_{\rm pop}^2$.
\end{lemma}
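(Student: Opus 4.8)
The plan is to compute the two expectations separately, exploiting the independence between the study effects $\delta_i$ and the errors $\xi_{ij}$ in model (\ref{IPD}). Throughout I write $N=\sum_{i=1}^k n_i$, decompose the study means as $y_i=\mu+\delta_i+\bar\xi_i$ with $\bar\xi_i=\sum_{j=1}^{n_i}\xi_{ij}/n_i$, and use that under the common population variance assumption $\sigma_{\rm pop}^2$ coincides with the ANOVA error variance $\sigma^2$, since $n_i\sigma_{y_i}^2=\sigma^2$. The distributional facts I rely on are $E(y_i)=\mu$, ${\rm var}(y_i)=\tau^2+\sigma^2/n_i$, and the mutual independence of the $y_i$ across studies.

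For $E({\rm MSW}_{\rm MA})$ the argument is routine. Conditioning on $\delta_i$, the observations $y_{i1},\dots,y_{in_i}$ are i.i.d.\ $N(\mu+\delta_i,\sigma^2)$, so the within-study sum of squares satisfies $E\{\sum_{j=1}^{n_i}(y_{ij}-y_i)^2\mid\delta_i\}=(n_i-1)\sigma^2$, a quantity free of $\delta_i$. Hence $E(\hat\sigma_{y_i}^2)=\sigma^2/n_i$ and $E\{n_i(n_i-1)\hat\sigma_{y_i}^2\}=(n_i-1)\sigma^2$. Substituting into (\ref{msw}) makes the normalizing factor $\sum_{i=1}^k(n_i-1)$ cancel, leaving $E({\rm MSW}_{\rm MA})=\sigma^2=\sigma_{\rm pop}^2$.

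The real work is $E({\rm MSB}_{\rm MA})$. I would start from the weighted identity $\sum_{i=1}^k n_i(y_i-\bar y)^2=\sum_{i=1}^k n_i y_i^2-N\bar y^2$ with $\bar y=N^{-1}\sum_i n_i y_i$. The first term gives $E(\sum_i n_i y_i^2)=\sum_i n_i(\tau^2+\sigma^2/n_i+\mu^2)=N\tau^2+k\sigma^2+N\mu^2$. For the second, independence yields ${\rm var}(\bar y)=N^{-2}\sum_i n_i^2(\tau^2+\sigma^2/n_i)=\tau^2\sum_i n_i^2/N^2+\sigma^2/N$, whence $E(N\bar y^2)=\tau^2\sum_i n_i^2/N+\sigma^2+N\mu^2$. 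Subtracting, the $N\mu^2$ terms cancel and the remainder equals $\tau^2(N-\sum_i n_i^2/N)+(k-1)\sigma^2$. Recognizing $N-\sum_i n_i^2/N=(k-1)\tilde n$ from the definition (\ref{nbar}) and dividing by $k-1$ delivers $E({\rm MSB}_{\rm MA})=\tilde n\tau^2+\sigma^2=\tilde n\tau^2+\sigma_{\rm pop}^2$.

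The only delicate point is the bookkeeping in the MSB computation: one must carry the weighted grand mean $\bar y$ correctly so that the $\sum_i n_i^2/N$ term surfaces, and then match it against the adjusted mean sample size $\tilde n$ in (\ref{nbar}); the rest is a direct second-moment calculation. The cancellation of the $N\mu^2$ terms, reflecting the location invariance of the statistic, serves as a convenient sanity check.
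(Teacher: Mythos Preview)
Your proposal is correct and follows essentially the same approach as the paper: both expand the weighted between-study sum of squares as $\sum_i n_i y_i^2 - N\bar y^2$ (the paper writes the second term as $N^{-1}E(\sum_i n_i y_i)^2$), compute the two pieces via variance-plus-mean-squared, and then identify $N-\sum_i n_i^2/N$ with $(k-1)\tilde n$. Your treatment of ${\rm MSW}_{\rm MA}$ via conditioning on $\delta_i$ is slightly more explicit than the paper's one-line appeal to $E(n_i\hat\sigma_{y_i}^2)=\sigma_{\rm pop}^2$, but the substance is identical.
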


\begin{proof}[Proof of Lemma \ref{lemm1}]
	Denote by $\sigma_{y_i}^2=\sigma^2/n_i$. With the summary data, $y_i$ are independent normal random variables with mean $\mu$ and variances $\tau^2+\sigma_{y_i}^2$. Then the variance of $\sum_{i=1}^kn_iy_i$ is
	\beqrs
	{\rm Var}\left(\sum_{i=1}^kn_iy_i\right)=\sum_{i=1}^k{\rm Var}\left(n_iy_i\right)=\tau^2\sum_{i=1}^kn_i^2+\sum_{i=1}^kn_i^2\sigma_{y_i}^2.
	\eeqrs
	Thus,
	\beqrs
	E\left(\sum_{i=1}^kn_iy_i\right)^2&=&{\rm Var}\left(\sum_{i=1}^kn_iy_i\right)+\left\{E\left(\sum_{i=1}^kn_iy_i\right)\right\}^2\\
	&=&\tau^2\sum_{i=1}^kn_i^2+\sum_{i=1}^kn_i^2\sigma_{y_i}^2+\mu^2\left(\sum_{i=1}^kn_i\right)^2.
	\eeqrs
	Further, it can be derived that
	\beqrs
	E\left\{\sum_{i=1}^kn_i\left(y_i-\bar y\right)^2\right\}&=&\sum_{i=1}^kn_iE\left(y_i^2\right)-\frac{1}{\sum_{i=1}^kn_i}E\left(\sum_{i=1}^kn_iy_i\right)^2\\
	&=&\sum_{i=1}^{k}n_i\left[{\rm Var}\left(y_i\right)+\left\{E\left(y_i\right)\right\}^2\right]-\frac{1}{\sum_{i=1}^kn_i}E\left(\sum_{i=1}^kn_iy_i\right)^2\\
	&=&\sum_{i=1}^kn_i\left(\tau^2+\sigma_{y_i}^2+\mu^2\right)-\frac{1}{\sum_{i=1}^kn_i}\left\{\tau^2\sum_{i=1}^kn_i^2+\sum_{i=1}^kn_i^2\sigma_{y_i}^2+\mu^2(\sum_{i=1}^kn_i)^2\right\}\\
	&=&\tau^2(\sum_{i=1}^kn_i-\frac{\sum_{i=1}^kn^2_i}{\sum_{i=1}^kn_i})+\sum_{i=1}^kn_i\sigma^2_{y_i}-\frac{\sum_{i=1}^kn^2_i\sigma^2_{y_i}}{\sum_{i=1}^kn_i}.
	\eeqrs
	Since $\sigma_{y_i}^2=\sigma_{\rm pop}^2/n_i$, and $\tilde n=(\sum_{i=1}^kn_i-\sum_{i=1}^kn^2_i/ \sum_{i=1}^kn_i)/(k-1)$,
	\beqrs
	E\left\{\sum_{i=1}^kn_i\left(y_i-\bar y\right)^2\right\}&=&(k-1)\tilde n\tau^2+(k-1)\sigma_{\rm pop}^2.
	\eeqrs
	Thus, $E({\rm MSB}_{\rm MA})=\tilde n\tau^2+\sigma_{\rm pop}^2$.
	
	As for $E({\rm MSW}_{\rm MA})=\sigma_{\rm pop}^2$, it is derived directly by the fact that $E(n_i\hat{\sigma}^2_{y_i})=\sigma_{\rm pop}^2$.
\end{proof}

With Lemma \ref{lemm1}, $E({\rm MSB}_{\rm MA}-{\rm MSW}_{\rm MA})=\tilde n\tau^2$, and $E\{{\rm MSB}_{\rm MA}+(\tilde n-1){\rm MSW}_{\rm MA}\}=\tilde n(\tau^2+\sigma_{\rm pop}^2)$. Thus, ${\rm ICC}_{\rm MA}=\tau^2/(\tau^2+\sigma_{\rm pop}^2)$ can be estimated by $({\rm MSB}_{\rm MA}-{\rm MSW}_{\rm MA})/ \{{\rm MSB}_{\rm MA}+(\tilde n-1){\rm MSW}_{\rm MA}\}$. Truncating the negative value to zero, the $I^2_{\rm ANOVA}$ statistic in (\ref{iqb}) can be derived.

\bmsection{The statistical model for the mean difference in meta-analysis}\label{appD}

For meta-analysis of studies with two arms, we start with modeling the individual patient data in a single study. In analogy with model (\ref{IPD}), we model the individual observations $y_{ij}^T$ and $y_{ij'}^C$ of the treatment group and the control group for the $i$th study as
\beqrs
\begin{aligned}
	y_{ij}^T&=\mu^T+\delta_i^T+\xi_{ij}^T,\quad j=1,\ldots,n_i^T,\\
	y_{ij'}^C&=\mu^C+\delta_i^C+\xi_{ij'}^C,\quad j'=1,\ldots,n_i^C,
\end{aligned}
\eeqrs
where the superscript ``T" represents the treatment group, and the superscript ``C" represents the control group. Similar to the assumptions in model (\ref{IPD}), we assume that $\delta_i^T$, $\xi_{ij'}^T$, $\delta_i^C$ and $\xi_{ij'}^C$ are independent of each other. For the random errors of different observations in the same study, it is natural to assume they are i.i.d. normal random errors with mean 0 and share a common variance $\sigma^2$. Then the true effect size for each study is routinely presented by the mean difference
\beqrs
{\rm MD}_i=(\mu^T+\delta_i^T)-(\mu^C+\delta_i^C).
\eeqrs

For each study, the observed mean difference is
\beqr\label{ymd}
y_i^T-y_i^C=(\mu^T-\mu^C)+(\delta_i^T-\delta_i^C)+(\frac{\sum_{j=1}^{n_i^T}\xi_{ij}}{n_i^T}-\frac{\sum_{j'=1}^{n_i^C}\xi_{ij}}{n_i^C}),
\eeqr
where $y_i^T=\sum_{j=1}^{n^T}\xi_{ij}/n_i^T$, and $y_i^C=\sum_{j=1}^{n^C}\xi_{ij}/n_i^C$. Further, let $y_i=y_i^T-y_i^C$, $\mu=\mu^T-\mu^C$, $\delta_i=\delta_i^T-\delta_i^C$, and $\epsilon_i=\sum_{j=1}^{n_i^T}\xi_{ij}/n_i^T-\sum_{j'=1}^{n_i^C}\xi_{ij}/n_i^C$. Regardless of the dependence between $\delta_i^T$ and $\delta_i^C$, we simply assume that $\delta_i$ are i.i.d. normal random variables with mean 0 and variance $\tau^2\geq 0$, where $\tau^2$ measures the magnitude of the heterogeneity between the studies. Then model (\ref{ymd}) reduces to
\beqr\label{remmd}
y_i=\mu+\delta_i+\epsilon_i,
\eeqr
where $\delta_i\stackrel{\text{i.i.d.}}{\sim} N(0,\tau^2)$ and $\epsilon_i\stackrel{\text{ind}}{\sim} N(0,(1/n_i^T+1/n_i^C)\sigma^2)$. We note that model (\ref{remmd}) has the same form as in (\ref{REM}), except for the variance of $\epsilon_i$.

To estimate ${\rm ICC}_{\rm MA}$ for the mean difference based on ANOVA, we apply the results for the single-arm studies directly. Letting $n_i=1/(1/n_i^T+1/n_i^C)$, Lemma \ref{lemm1} in \hyperref[appB]{Appendix B} also holds that
\beqrs
E({\rm MSB}_{\rm MA})&=&\tilde n\tau^2+\sigma^2,\\
E({\rm MSW}_{\rm MA})&=&\sigma^2.
\eeqrs
Together with the notation of $\tilde n$, the $I^2_{\rm ANOVA}$ statistic in (\ref{iqb}) can be derived.

The $I^2_{\rm A}$ statistic in (\ref{iq12}) is derived similar to that for the mean. Furthermore, similar to the expression in formula (\ref{unequalv}), the $I_{\rm A}^2$ statistic can also be applied and well interpreted when the population variances differ across the studies.

\bmsection{The statistical model for the  standardized mean difference in meta-analysis}\label{appE}
For the standardized mean difference, we model the individual observations $y_{ij}^T$ and $y_{ij'}^C$ of the treatment group and the control group for the $i$th study as
\beqrs
\begin{aligned}
	y_{ij}^T&=\sigma_i(\mu^T+\delta_i^T+\xi_{ij}^T),\quad j=1,\ldots,n_i^T,\\
	y_{ij'}^C&=\sigma_i(\mu^C+\delta_i^C+\xi_{ij'}^C),\quad j'=1,\ldots,n_i^C,
\end{aligned}
\eeqrs
where the superscript ``T" represents the treatment group, and the superscript ``C" represents the control group. Similar to the assumptions in model (\ref{IPD}), we assume that $\delta_i^T$, $\xi_{ij'}^T$, $\delta_i^C$ and $\xi_{ij'}^C$ are independent of each other. In (ipdsmd), $\xi_{ij'}^T$ and $\xi_{ij'}^C$ are assumed to be i.i.d. normal random errors with mean 0 and variance 1. Then with different values of $\sigma_i$, the population variances for different studies are $\sigma_i^2$, respectively. To eliminate the influence of the scale, SMDs are considered to represent the effect sizes, which is defined by
\beqrs
{\rm SMD}_i&=&\{(\sigma_i\mu^T+\sigma_i\delta_i^T)-(\sigma_i\mu^C+\sigma_i\delta_i^C)\}/\sigma_i=(\mu^T+\delta_i^T)-(\mu^C+\delta_i^C).
\eeqrs

For each study, SMD$_i$ is estimated by
\beqr\label{ymd2}
&&\frac{y_i^T-y_i^C}{\hat\sigma_i}=\frac{\sigma_i}{\hat\sigma_i}\{(\mu^T-\mu^C)+(\delta_i^T-\delta_i^C)+(\frac{\sum_{j=1}^{n_i^T}\xi_{ij}}{n_i^T}-\frac{\sum_{j'=1}^{n_i^C}\xi_{ij}}{n_i^C})\},
\eeqr
where $\hat\sigma_i$ is an estimate for $\sigma_i$, $y_i^T=\sum_{j=1}^{n^T}\xi_{ij}/n_i^T$, and $y_i^C=\sum_{j=1}^{n^C}\xi_{ij}/n_i^C$. For simplicity, we assume that $\sigma_i$ can be accurately estimated and thus $\sigma_i/\hat\sigma_i=1$. Further, let $y_i=y_i^T-y_i^C$, $\mu=\mu^T-\mu^C$, $\delta_i=\delta_i^T-\delta_i^C$, and $\epsilon_i=\sum_{j=1}^{n_i^T}\xi_{ij}/n_i^T-\sum_{j'=1}^{n_i^C}\xi_{ij}/n_i^C$. Regardless of the dependence between $\delta_i^T$ and $\delta_i^C$, we simply assume that $\delta_i$ are i.i.d. normal random variables with mean 0 and variance $\tau^2\geq 0$, where $\tau^2$ measures the magnitude of the heterogeneity between the studies. Then model (\ref{ymd2}) reduces to
\beqr\label{remsmd}
y_i=\mu+\delta_i+\epsilon_i,
\eeqr
where $\delta_i\stackrel{\text{i.i.d.}}{\sim} N(0,\tau^2)$ and $\epsilon_i\stackrel{\text{ind}}{\sim}N(0,1/n_i^T+1/n_i^C)$. We note that model (\ref{remsmd})
has the same form as in (\ref{REM}), except for the variance of $\epsilon_i$. 

To estimate ${\rm ICC}_{\rm MA}$ for the standardized mean difference based on ANOVA, we also apply the results for the single-arm studies directly. Letting $n_i=1/(1/n_i^T+1/n_i^C)$, Lemma \ref{lemm1} in \hyperref[appB]{Appendix B} also holds that
\beqrs
E({\rm MSB}_{\rm MA})&=&\tilde n\tau^2+1.
\eeqrs
Together with the notation of $\tilde n$ and ${\rm MSB}_{\rm MA}=1$, the $I^2_{\rm ANOVA}$ statistic in (\ref{iqb}) can be derived.

\bmsection{Comparison between the $I^2$ and $I^2_{\rm A}$ statistics}\label{appa}
\begin{proof}[Proof of (a)]
Firstly, $(k-1)(\tilde n-1)$ in (\ref{iq1}) can be expressed as
\beqrs
\sum_{i=1}^kn_i-\frac{\sum_{i=1}^kn_i^2}{\sum_{i=1}^kn_i}-(k-1)&=&\frac{(\sum_{i=1}^kn_i)^2-\sum_{i=1}^kn_i^2-(k-1)\sum_{i=1}^kn_i}{\sum_{i=1}^kn_i}\\
&=&\frac{\{\sum_{i=1}^k(n_i-1)+k\}^2-\sum_{i=1}^k\{(n_i-1)+1\}^2-(k-1)\{\sum_{i=1}^k(n_i-1)+k\}}{\sum_{i=1}^kn_i}\\
&=&\frac{\{\sum_{i=1}^k(n_i-1)\}^2-\sum_{i=1}^k(n_i-1)^2+(k-1)\sum_{i=1}^k(n_i-1)}{\sum_{i=1}^kn_i}.
\eeqrs
Since the sample sizes $n_i\ge 1$ for all the studies, we have $\{\sum_{i=1}^k(n_i-1)\}^2-\sum_{i=1}^k(n_i-1)^2\ge 0$. Noting also that $k\ge 2$, it further yields that $\sum_{i=1}^kn_i-\sum_{i=1}^kn_i^2/\sum_{i=1}^kn_i-(k-1)\ge 0$, and the equality holds only when $n_i=1$ for all studies.
\end{proof}

\begin{proof}[Proof of (b)]
		For the balanced design, the weights are given by $w_i=n/\sigma_{\rm pop}^2$. Hence,
		\beqrs
	\frac{Q}{(k-1)(\tilde n-1)}&=&\frac{\sum_{i=1}^kw_i(y_i-\sum_{i=1}^kw_iy_i/\sum_{i=1}^kw_i)^2}{(k-1)(n-1)}\\
	&=&\frac{n}{n-1}\cdot\frac{1}{\sigma_{\rm pop}^2}\cdot\frac{\sum_{i=1}^k(y_i-\sum_{i=1}^ky_i/k)^2}{k-1}.
		\eeqrs
		As $n\to\infty$, $y_i$ converges in distribution to $N(\mu,\tau^2)$. Therefore, $\sum_{i=1}^k(y_i-\sum_{i=1}^ky_i/k)^2/\tau^2$ converges in distribution to $\chi^2(k-1)$. Along with the fact that ${\rm Var}\{\sum_{i=1}^k(y_i-\sum_{i=1}^ky_i/k)^2/(k-1)\}\to 0$ as $k\to \infty$, it follows that $Q/\{(k-1)(n-1)\}$ converges in probability to $\tau^2/\sigma_{\rm pop}^2$ as $k\to \infty$ and $n\to\infty$.
		
		Similarly, for any fixed $k$, it can be drived that $Q/(k-1)=(n/\sigma_{\rm pop}^2)\sum_{i=1}^k(y_i-\sum_{i=1}^ky_i/k)^2/(k-1)=O(n)$.
\end{proof}

\begin{proof}[Proof of (c)]
		When all other sample sizes are fixed, we have
		\beqrs
		\frac{\partial{\{(k-1)(\tilde n-1)\}}}{\partial n_i}=\frac{\partial{\left\{n_i+\sum_{j\neq i}n_j-(n_i^2+\sum_{j\ne i}n_j^2)/(n_i+\sum_{j\ne i}n_j)\right\}}}{\partial n_i}=\frac{2\sum_{j\ne i}n_j^2}{n_i^2+2n_i\sum_{j\ne i}n_j+\sum_{j\ne i}n_j^2}>0.
		\eeqrs
		This shows that $(k-1)(\tilde n-1)$ is an increasing function of $n_i$ given that all other sample sizes are fixed.
		
	For the unbalanced design, by noting that $w_i=n_i/\sigma_{\rm pop}^2$, we have
\beqrs
Q=\frac{1}{\sigma_{\rm pop}^2}\sum_{i=1}^kn_i\left(y_i-\frac{\sum_{i=1}^kn_iy_i}{\sum_{i=1}^kn_i}\right)^2=\frac{1}{\sigma_{\rm pop}^2}\left\{\sum_{i=1}^kn_iy_i^2-\frac{(\sum_{i=1}^kn_iy_i)^2}{\sum_{i=1}^kn_i} \right\}.
\eeqrs
As $n_i\to\infty$, $y_i$ converges in distribution to $N(\mu,\tau^2)$. With $y_i\sim N(\mu,\tau^2)$, we have $\{(n_i-\sum_{i=1}^kn_i^2)/\sum_{i=1}^kn_i\}^{-1}E\left\{\sum_{i=1}^kn_iy_i^2-(\sum_{i=1}^kn_iy_i)^2/\sum_{i=1}^kn_i\right\}$ converges to $\tau^2$. Additionally, when $k\to\infty$ and $n_i$ are of the same order, $\{(n_i-\sum_{i=1}^kn_i^2)/\sum_{i=1}^kn_i\}^{-2}{\rm Var}\left\{\sum_{i=1}^kn_iy_i^2-(\sum_{i=1}^kn_iy_i)^2/\sum_{i=1}^kn_i\right\}\to 0$. Thus, as $n\to \infty$ and $k\to\infty$, $Q/\{(k-1)(\tilde n-1)\}\to\tau^2/\sigma_{pop}^2$.
\end{proof}

\begin{proof}[Proof of (d)]
By (\ref{iq1}), we have
\beqrs
I_{\rm A}^2&=&\max\left\{\frac{Q-(k-1)}{Q+(k-1)(\tilde n-1)},0\right\}\\
         &=&\max\left\{\frac{
	\left\{Q-(k-1)\right\}/\left(\sum_{i=1}^kw_i-\sum_{i=1}^kw_i^2/\sum_{i=1}^kw_i\right)}{\left\{Q-(k-1)\right\}/\left(\sum_{i=1}^kw_i-\sum_{i=1}^kw_i^2/\sum_{i=1}^kw_i\right)+\left\{(k-1)\tilde n\right\}/\left(\sum_{i=1}^kw_i-\sum_{i=1}^kw_i^2/\sum_{i=1}^kw_i\right)},0\right\},
\eeqrs
where $\hat\tau^2={\mbox{max}}\{\{Q-(k-1)\}/(\sum_{i=1}^kw_i-\sum_{i=1}^kw_i^2/\sum_{i=1}^kw_i),0\}$ and $\tilde\sigma_y^2=(k-1)/(\sum_{i=1}^kw_i-\sum_{i=1}^kw_i^2/\sum_{i=1}^kw_i)$. Note that the above equality holds for any $Q$ value, and in case $Q<k-1$, the both sides of the last equation are zero and it still holds. 

\end{proof}

\begin{figure}[htp!]
	\begin{center}
		\begin{tabular}{cc}
			\psfig{figure=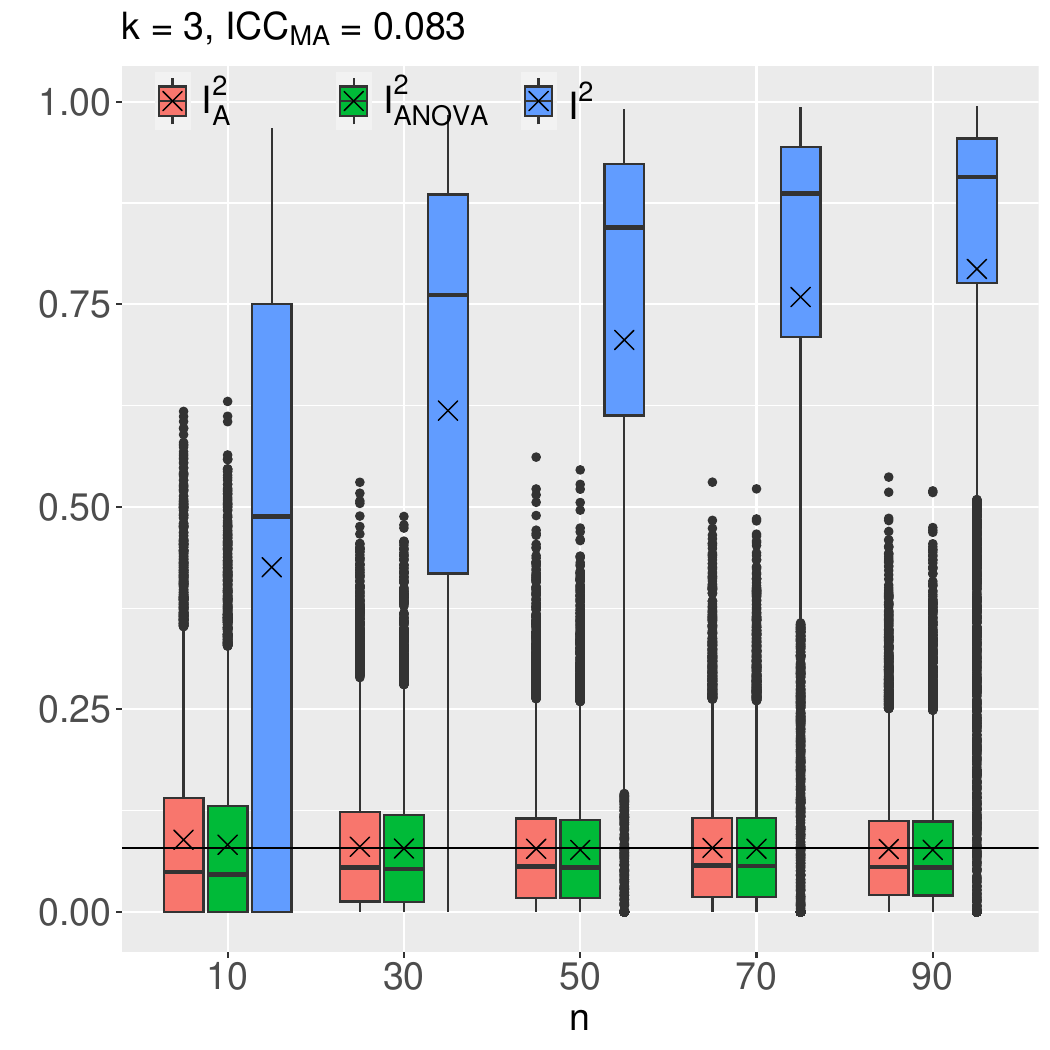,width=3.3in,angle=0}&
			\psfig{figure=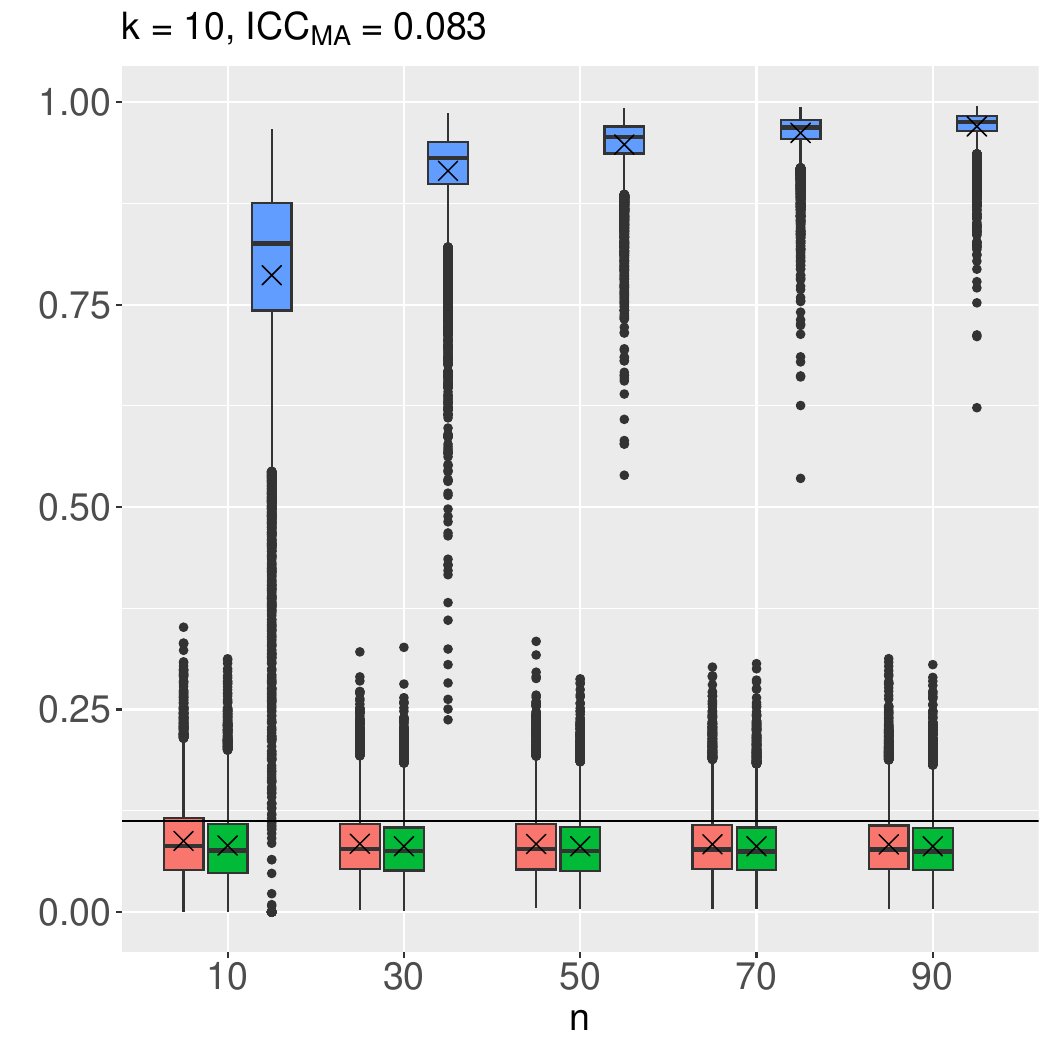,width=3.3in,angle=0}\\
			\psfig{figure=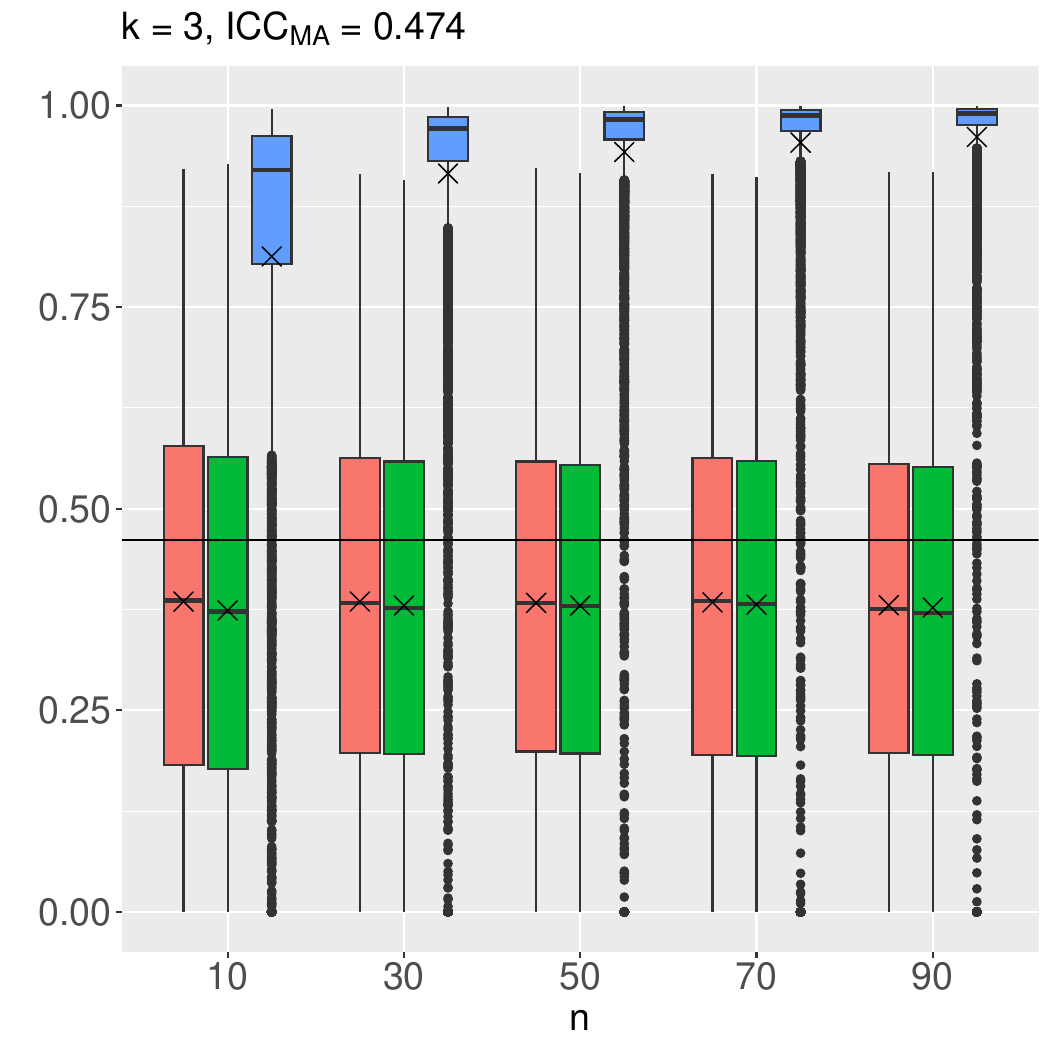,width=3.3in,angle=0}&
			\psfig{figure=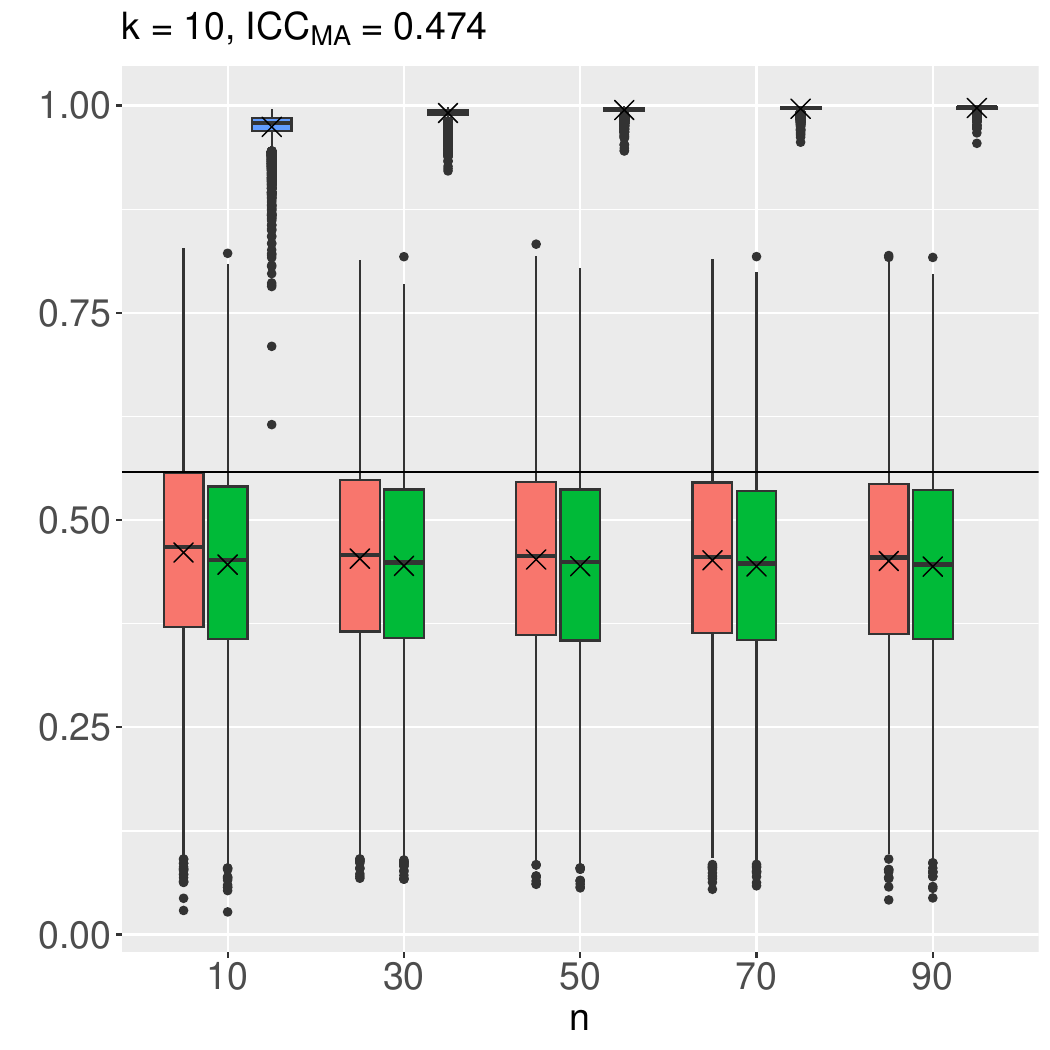,width=3.3in,angle=0}
		\end{tabular}{\caption{Boxplots of the three statistics for the mean with 10,000 repetitions. The red boxes represent the $I^2_{\rm A}$ statistic, the green boxes represent the $I^2_{\rm ANOVA}$ statistic, and the blue boxes represent the $I^2$ statistic. The crosses on each box are the mean values of the 10000 repetitions. The solid lines stand for the absolute heterogeneity ${\rm ICC}_{\rm MA}$ with $\sigma_{\rm pop}^2=100$.}\label{app}}
	\end{center}
\end{figure}

\end{document}